\providecommand{\U}[1]{\protect\rule{.1in}{.1in}}
\newtheorem{theorem}{Theorem}
\newtheorem{definition}{Definition}
\newtheorem{lemma}[theorem]{Lemma}
\newtheorem{proposition}[theorem]{Proposition}
\newtheorem{remark}[theorem]{Remark}
\newenvironment{proof}[1][Proof]{\noindent\textbf{#1.} }{\ \rule{0.5em}{0.5em}}
\DeclareMathOperator{\Tr}{Tr}
\DeclareMathOperator{\hyp}{hyp}
\begin{document}

\title{Semi-definite optimization of the measured relative entropies of quantum
states and channels}
\author{Zixin Huang\thanks{School of Mathematical and Physical Sciences, Macquarie
University, NSW 2109, Australia} \thanks{Centre for Quantum Software and
Information, Faculty of Engineering and Information Technology, University of
Technology Sydney, Ultimo, NSW 2007, Australia}
\and Mark M. Wilde\thanks{School of Electrical and Computer Engineering, Cornell
University, Ithaca, New York 14850, USA}}
\maketitle

\begin{abstract}
The measured relative entropies of quantum states and channels find
operational significance in quantum information theory as achievable error
rates in hypothesis testing tasks. They are of interest in the near term, as
they correspond to hybrid quantum--classical strategies with technological
requirements far less challenging to implement than required by the most
general strategies allowed by quantum mechanics. In this paper, we prove that
these measured relative entropies can be calculated efficiently by means of
semi-definite programming, by making use of variational formulas for the
measured relative entropies of states and semi-definite representations of the
weighted geometric mean and the operator connection of the logarithm. Not only
do the semi-definite programs output the optimal values of the measured
relative entropies of states and channels, but they also provide numerical
characterizations of optimal strategies for achieving them, which is of
significant practical interest for designing hypothesis testing protocols.

\end{abstract}
\tableofcontents

\section{Introduction}

\subsection{Background}

In many quantum information processing tasks, distinguishing quantum states or channels is central to assessing system performance. Measures of distinguishability such as
the relative entropy \cite{Kullback1951}\ and its generalization to R\'{e}nyi
relative entropy \cite{Renyi1961OnInformation}  play important roles in information theory, finding direct operational meaning in hypothesis testing tasks
\cite{chernoff1952measure,stein_unpublished,chernoff_1956,strassen1962asymptotische,Hoeffding65,HK89} while being used to construct other entropic measures like mutual
information and conditional entropy \cite{Sibson1969,Augustin1978,Csiszar1995}. There are a number of quantum generalizations of these quantities
\cite{U62,P85,P86,muller2013quantum,wilde2014strong}, finding operational
meaning in quantum hypothesis testing tasks
\cite{hiai1991proper,ogawa2005strong,Nagaoka06,ACMBMAV07,hayashi2007error,Audenaert2008,NS09}
while being used to construct other entropic measures like quantum mutual
information and conditional entropy (see, e.g., \cite[Definition~7.18]
{khatri2024principles}). 

Typically, the optimal hypothesis-testing success rate associated with the quantum relative entropy (Umegaki relative entropy) requires collective measurements over many identical copies of a state. Implementing such joint measurements, however, demands substantial resources—potentially full-scale quantum computation \cite{Delaney2022,crossman2023quantum}.
In contrast, measured relative entropies \cite{Donald1986,P09} significantly relax these technological demands. Intuitively, the ‘measured’ versions of quantum relative entropies quantify how well two quantum states can be distinguished by
performing measurements on individual copies of the state, as opposed to using an arbitrary collective quantum measurement. Instead of collective operations, they quantify how well two states (or channels) can be distinguished when one performs a measurement on each system separately. In this way, the measured quantities capture the fundamental limits of hybrid quantum–classical discrimination strategies that are achievable with today’s experimental capabilities.
 
The measured relative entropy \cite{Donald1986,P09} has been generalized to the R\'{e}nyi family as well
\cite[Eqs.~(3.116)--(3.117)]{Fuchs1996}. Indeed, the idea behind these
measures is to evaluate classical distinguishability measures on the
distributions that result from performing a measurement on a single copy of
the state and then optimize them over all possible measurements. See
Definition~\ref{def:measured-renyi-states} and Definition~\ref{def:measured-rel-ent-states} for precise
definitions of the standard and R\'{e}nyi measured relative entropies,
respectively. In such a way, these quantities lead to technologically feasible
strategies for quantum hypothesis testing, which consist of a hybrid approach involving quantum
measurement and classical post-processing. Indeed, even though there are gaps
between the fundamental error rates of quantum hypothesis testing under
general, collective measurements and those that result from these hybrid
quantum--classical strategies, the latter strategies are more feasible in the
near term.

Beyond distinguishing states, one can also distinguish quantum channels from
one another, a task known as quantum channel discrimination, which has been
studied extensively in quantum information
\cite{CDP08,Duan09,PW09,Harrow10,MPW10,Cooney2016,WW19,wilde2020amortized,FFRS20,fang2021geometric,Bergh2024,SHW22,bergh2023infinite}. The most general strategy allowed by quantum mechanics in such a scenario is
rather complex (see \cite[Figure~1]{wilde2020amortized}), and the
technological requirements for realizing such a general strategy are even more
challenging than those needed to perform a collective measurement (i.e., one would need more complex quantum computations to realize such strategies). As such,
one can also consider relaxing the technological requirements for channel
discrimination by considering measured relative entropies of channels, as a
special case of the generalized channel divergences defined in
\cite[Definition~II.2]{Leditzky2018}. See Definition~\ref{def:measured-renyi-channels} and
Definition~\ref{def:measured-rel-ent-channels} for precise definitions of the
standard and R\'{e}nyi measured relative entropies of channels, respectively,
which also include energy constraints on the channel input state. Although the
diamond distance \cite{Kit97} and its energy-constrained counterpart
\cite{Shirokov2018,Winter2017}\ are\ in widespread use as measures of channel
distinguishability (see, e.g., \cite[Section~6.3]{khatri2024principles}), the
related general notion of measured relative entropy of channels has only been
explicitly defined more recently \cite[Eq.~(8)]{Li2022}, therein related to an
operational task called sequential channel discrimination. Here we also
explicitly define the measured R\'{e}nyi relative entropy of channels, as a
special case of the general concept from \cite[Definition~II.2]{Leditzky2018}
and \cite[Eq.~(12.12)]{SWAT18}.

\subsection{Summary of results}

In this paper, we prove that the measured relative entropies of quantum states
and channels can be computed by means of semi-definite optimization algorithms
(also known as semi-definite programs). These algorithms have runtimes that
scale efficiently with the dimension of the states and the input and output
dimensions of the channels, by employing known techniques for solving semi-definite programs \cite{PW2000,Arora2005,Arora2012,Lee2015}. Furthermore, an added benefit of these algorithms
is that, not only does one obtain the optimal values of the measured relative
entropies, but one also obtains numerically an optimal measurement for the
measured relative entropies of states and an optimal input state and
measurement for the measured relative entropies of channels. This latter
capability is of significant value for applications, in which one wishes to
construct a hybrid quantum-classical strategy for achieving the error rates of
hypothesis testing achievable by the measured relative entropies.

Our claims build upon two papers, which, coincidentally, were initially
released on the quant-ph arXiv within two days of each other
\cite{Berta2015OnEntropies,Fawzi2017}. Another edifice for our claims is
\cite{Fawzi2019}. In more detail, the paper \cite{Berta2015OnEntropies}
established variational formulas for the measured relative entropy and
measured R\'{e}nyi relative entropy, while the paper \cite{Fawzi2017} proved
that the hypograph and epigraph of the weighted geometric mean have efficient
semi-definite representations (here, see also \cite{Sagnol2013}), and the paper \cite{Fawzi2019} proved that the
hypograph of the operator connection of the logarithm has an efficient
semi-definite representation. Here, we essentially combine these findings to
arrive at our claims.

Indeed, for quantum states, our main contributions are to establish reductions
of the variational formulas of \cite{Berta2015OnEntropies} to semi-definite
optimization problems involving linear objective functions and the
aforementioned hypographs or epigraphs (see
Propositions~\ref{prop:opt-meas-renyi-states} and
\ref{prop:meas-rel-ent-states-var-rep-perspective}). This finding is
admittedly a rather direct combination of the contributions of
\cite{Berta2015OnEntropies,Fawzi2017,Fawzi2019}. However, it is ultimately
useful in establishing our next contribution, which is an extension of these
findings to measured relative entropies of channels. To
establish these latter results, we use basic properties
of weighted geometric means and the operator connection of the logarithm (see
Propositions~\ref{prop:var-rep-measured-renyi-channels}\ and
\ref{prop:var-rep-measured-div-channels}).

One benefit of our findings is that they lead to semi-definite programs involving
 linear matrix inequalities each of size $2d\times2d$ when the states are
$d\times d$ matrices and of size $2d_{A}d_{B}\times2d_{A}d_{B}$ when the channels
have input dimension $d_{A}$ and output dimension $d_{B}$ (see
Propositions~\ref{prop:opt-meas-renyi-states},
\ref{prop:meas-rel-ent-states-var-rep-perspective},
\ref{prop:var-rep-measured-renyi-channels},\ and
\ref{prop:var-rep-measured-div-channels} for precise statements). As such,
they do not suffer from the quadratic increase in size that occurs when
applying the approach from \cite{Fawzi2017,Fawzi2019} to the Petz--R\'{e}nyi
and standard quantum relative entropies (however, note that there has been
progress on addressing this issue more recently
\cite{fawzi2023optimalselfconcordantbarriersquantum,faust2023rationalapproximationsoperatormonotone,kossmann2024optimisingrelativeentropysemi}). Furthermore, it is unclear how to apply the approach from
\cite{Fawzi2017,Fawzi2019} for computing the dynamical (channel)\ version of
these quantities. However, one of our main contributions is semi-definite
programs for the measured relative entropies of channels, and the transition
from our claims for states to our claims for channels is smooth, with the
proofs consisting of just a few lines (see
\eqref{eq:proof-channel-renyi-1}--\eqref{eq:proof-channel-renyi-last}\ and~\eqref{eq:proof-channel-standard-1}--\eqref{eq:proof-channel-standard-last} for these steps).

\subsection{Organization of the paper}

The rest of our paper is organized as follows. Section~\ref{sec:notation} establishes notation and reviews background material, including the weighted
geometric mean and its properties, its hypograph and epigraph, and operator
connections and their properties (especially for the logarithm). The remaining
Sections~\ref{sec:measured-renyi-states},~\ref{sec:measured-rel-ent-states},
\ref{sec:measured-renyi-channels}, and~\ref{sec:measured-rel-ent-channels} provide essential definitions and detail our main results for measured
R\'{e}nyi relative entropy of states, measured relative entropy of states,
measured R\'{e}nyi relative entropy of channels, and measured relative entropy
of channels, respectively. We conclude in Section~\ref{sec:conclusion}\ with a
brief summary and some directions for future research.

\section{Notation and Preliminaries}

\label{sec:notation}For a Hilbert space $\mathcal{H}$, we employ the following
notation:
\[
\begin{tabular}
[c]{cl}
$\mathbb{L}(\mathcal{H})$ & set of linear operators acting on $\mathcal{H}$\\
$\mathbb{H}(\mathcal{H})$ & set of Hermitian operators acting on $\mathcal{H}
$\\
$\mathbb{P}(\mathcal{H)}$ & set of positive semi-definite operators acting on
$\mathcal{H}$\\
$\mathbb{P}_{>0}(\mathcal{H)}$ & set of positive definite operators acting on
$\mathcal{H}$\\
$\mathbb{D}(\mathcal{H)}$ & set of density operators acting on $\mathcal{H}$
\end{tabular}
\
\]
Note that $\mathbb{D}(\mathcal{H})\coloneqq\left\{  \rho\in\mathbb{P}
(\mathcal{H}):\operatorname{Tr}[\rho]=1\right\}  $.

A quantum channel is a completely positive and trace-preserving map that takes
$\mathbb{L}(\mathcal{H})$ to $\mathbb{L}(\mathcal{K})$, where $\mathcal{K}$ is
another Hilbert space. We often denote a quantum channel by $\mathcal{N}
_{A\rightarrow B}$, which indicates that the input space is $\mathbb{L}
(\mathcal{H}_{A})$ and the output space is $\mathbb{L}(\mathcal{H}_{B})$. See
\cite{Wilde2017,khatri2024principles}\ for further background on quantum
information theory.

\subsection{Weighted geometric mean and its properties}

\label{sec:weighted-geo-mean}Given positive definite operators $X,Y\in
\mathbb{P}_{>0}(\mathcal{H)}$, the weighted (operator)\ geometric mean
$X\#_{t}Y$ of weight $t\in\mathbb{R}$ is defined as
\cite{pusz1975functional,Kubo1980}
\begin{equation}
X\#_{t}Y\coloneqq X^{1/2}\left(  X^{-1/2}YX^{-1/2}\right)  ^{t}X^{1/2}.
\end{equation}
It is alternatively denoted by
\begin{equation}
G_{t}(X,Y)\coloneqq X\#_{t}Y,
\end{equation}
and we adopt this notation in what follows. The following identity holds for
all $t\in\mathbb{R}$ (see, e.g., \cite[Eq.~(7.6.5)]{khatri2024principles}):
\begin{equation}
G_{t}(X,Y)=G_{1-t}(Y,X),\label{eq:flip-identity}
\end{equation}
and so does the following identity for all $s,t\in\mathbb{R}$:
\begin{equation}
G_{s}(X,G_{t}(X,Y))=G_{st}(X,Y).
\end{equation}

The function $x\mapsto x^{t}$ is operator concave and operator monotone for
$t\in\left[  0,1\right]  $, operator antimonotone and operator convex for
$t\in\left[  -1,0\right]  $, and operator convex for $t\in\left[  1,2\right]
$ (see, e.g.,~\cite[Theorem~2.6]{carlen2010trace}). The function
\begin{equation}
\left(  X,Y\right)  \mapsto G_{t}(X,Y)
\end{equation}
is operator concave for $t\in\left[  0,1\right]  $ and operator convex for
$t\in\left[  -1,0\right]  \cup\left[  1,2\right]  $. For $t\in\left[
-1,1\right]  $, this statement is a consequence of~\cite[Theorem~3.5]
{Kubo1980} and, for $t\in\left[  1,2\right]  $, it is a consequence of
\eqref{eq:flip-identity} and~\cite[Theorem~3.5]{Kubo1980}, as well as the
aforementioned operator monotonicity properties of $x\mapsto x^{t}$. Concavity
and convexity of the function $\left(  X,Y\right)  \mapsto G_{t}(X,Y)$ is also
known as joint concavity and joint convexity of the weighted geometric mean.

A useful property of the weighted geometric mean for $t\in\left[  -1,2\right]
$\ is the transformer inequality \cite[Theorem~3.5]{Kubo1980}. For a linear operator
$K\in\mathbb{L}(\mathcal{H})$, the following inequality holds for all
$t\in\left[  0,1\right]  $:
\begin{equation}
KG_{t}(X,Y)K^{\dag}\leq G_{t}(KXK^{\dag},KYK^{\dag}),\label{eq:KA-transformer}
\end{equation}
and the opposite inequality holds for all $t\in\left[  -1,0\right]
\cup\left[  1,2\right]  $:
\begin{equation}
KG_{t}(X,Y)K^{\dag}\geq G_{t}(KXK^{\dag},KYK^{\dag}
).\label{eq:KA-transformer-2}
\end{equation}
These inequalities are saturated when $K$ is invertible; i.e., for all
$t\in\left[  -1,2\right]  $ and invertible $K$, the following holds:
\begin{equation}
KG_{t}(X,Y)K^{\dag}=G_{t}(KXK^{\dag},KYK^{\dag}).\label{eq:KA-transformer-eq}
\end{equation}
The inequalities in~\eqref{eq:KA-transformer}--\eqref{eq:KA-transformer-2}
were proven for all $t\in\left[  -1,1\right]  $ in~\cite[Theorem~3.5]
{Kubo1980}, and the extension to $t\in\left[  1,2\right]  $ follows from
\eqref{eq:flip-identity} and~\cite[Theorem~3.5]{Kubo1980}. See also
\cite[Lemma~47]{fang2021geometric}.

\subsection{Hypograph and epigraph of the weighted geometric mean}

\label{sec:hypo-epi}For $t\in\left[  0,1\right]  $, the operator hypograph of
$G_{t}$ is given by~\cite[Section~3.1]{Fawzi2017}
\begin{equation}
\text{hyp}_{t}\coloneqq\left\{  \left(  X,Y,T\right)  \in\mathbb{P}
_{>0}(\mathcal{H)\times}\mathbb{P}_{>0}(\mathcal{H)\times}\mathbb{H}
(\mathcal{H}):G_{t}(X,Y)\geq T\right\}  ,
\end{equation}
and for $t\in\left[  -1,0\right]  \cup\left[  1,2\right]  $, the operator
epigraph of $G_{t}$ is given by~\cite[Section~3.1]{Fawzi2017}
\begin{equation}
\text{epi}_{t}\coloneqq\left\{  \left(  X,Y,T\right)  \in\mathbb{P}
_{>0}(\mathcal{H)\times}\mathbb{P}_{>0}(\mathcal{H)\times}\mathbb{H}
(\mathcal{H}):G_{t}(X,Y)\leq T\right\}  .
\end{equation}
These sets are convex due to the aforementioned concavity and convexity
properties of $G_{t}$.

As a consequence of~\cite[Theorem~3]{Fawzi2017}, for all rational $t\in\left[
0,1\right]  $, the set hyp$_{t}$ is semi-definite representable (see also \cite{Sagnol2013}), and for all
rational $t\in\left[  -1,0\right]  \cup\left[  1,2\right]  $, the set
epi$_{t}$ is semi-definite representable. This means that these sets can be
represented in terms of a finite number of linear matrix inequalities
\cite{Nie2012} and implies that one can use the methods of semi-definite
programming to optimize over elements of these sets. This fact was put to use
in~\cite{fang2021geometric,Fawzi2021definingquantum} for quantum
information-theoretic applications, and we make use of it here as well.

\subsection{Operator connections}

\label{sec:op-connection}Generalizing the notion of an operator geometric
mean, an operator connection is defined in terms of an operator monotone
function $f$ as~\cite{Kubo1980}
\begin{equation}
P_{f}(X,Y)\coloneqq X^{1/2}f\!\left(  X^{-1/2}YX^{-1/2}\right)  X^{1/2},
\end{equation}
where $X,Y\in\mathbb{P}_{>0}(\mathcal{H)}$. This is also known as a
non-commutative perspective function~\cite{Effros2009,Ebadian2011,Effros2014}.
Due to~\cite[Theorem~3.5]{Kubo1980}, the function
\begin{equation}
\left(  X,Y\right)  \mapsto P_{f}(X,Y)
\end{equation}
is operator concave (i.e., jointly concave), and the transformer inequality
holds for every linear operator $K\in\mathbb{L}(\mathcal{H})$:
\begin{equation}
KP_{f}(X,Y)K^{\dag}\leq P_{f}(KXK^{\dag},KYK^{\dag}
).\label{eq:transformer-connection}
\end{equation}
Equality holds in~\eqref{eq:transformer-connection} if $K$ is invertible;
i.e., for invertible $K\in\mathbb{L}(\mathcal{H})$, the following equality
holds:
\begin{equation}
KP_{f}(X,Y)K^{\dag}=P_{f}(KXK^{\dag},KYK^{\dag}
).\label{eq:transformer-conn-eq}
\end{equation}
The operator hypograph of $P_{f}$ is given by
\begin{equation}
\text{hyp}_{f}\coloneqq\left\{  \left(  X,Y,T\right)  \in\mathbb{P}
_{>0}(\mathcal{H)\times}\mathbb{P}_{>0}(\mathcal{H)\times}\mathbb{H}
(\mathcal{H}):P_{f}(X,Y)\geq T\right\}  ,
\end{equation}
and it is a convex set due to the aforementioned joint concavity of
$P_{f}(X,Y)$.

The logarithm is the main example of an operator monotone function on which we
focus, other than the power functions from Section~\ref{sec:weighted-geo-mean}, due to its connection with relative entropy. Furthermore, there is an
efficient semi-definite approximation of the hypograph of the connection of
the logarithm (i.e., hyp$_{\ln}$)~\cite[Theorem~3]{Fawzi2019}, which leads to
semi-definite optimization algorithms for calculating measured relative
entropies of states and channels. This fact was put to use in
\cite{Fawzi2018,Wilde2018,bunandar2020numerical,Coutts2021certifying,Wang_2024}
for quantum information-theoretic applications, and we make use of it here as
well. To be clear, we use the following notation later on:
\begin{equation}
P_{\ln}(X,Y)\coloneqq X^{1/2}\ln\!\left(  X^{-1/2}YX^{-1/2}\right)
X^{1/2}.\label{eq:log-perspect}
\end{equation}
This is also related to the operator relative entropy~\cite{fujii1989relative}, for which one finds the following notation in the literature
\cite{fang2021geometric}:
\begin{align}
D_{\text{op}}(X\Vert Y) &  \coloneqq-P_{\ln}(X,Y)\\
&  =X^{1/2}\ln\!\left(  X^{1/2}Y^{-1}X^{1/2}\right)  X^{1/2}.
\end{align}

\section{Measured R\'{e}nyi relative entropy of states}

\label{sec:measured-renyi-states}

\subsection{Definition and basic properties}

Given a probability distribution $p\equiv\left(  p(x)\right)  _{x\in
\mathcal{X}}$ and a non-negative function $q\equiv\left(  q(x)\right)
_{x\in\mathcal{X}}$, the R\'{e}nyi relative entropy is defined for $\alpha
\in\left(  0,1\right)  \cup(1,\infty)$ as~\cite{Renyi1961OnInformation}
\begin{equation}
D_{\alpha}(p\Vert q)\coloneqq\frac{1}{\alpha-1}\ln\sum_{x\in\mathcal{X}
}p(x)^{\alpha}q(x)^{1-\alpha}, \label{eq:renyi-rel-ent-def}
\end{equation}
when $\alpha\in\left(  0,1\right)  $ or when $\alpha>1$ and
$\operatorname{supp}(p)\subseteq\operatorname{supp}(q)$. Otherwise, when
$\alpha>1$ and $\operatorname{supp}(p)\not \subseteq \operatorname{supp}(q)$,
it is set to $+\infty$. The R\'{e}nyi relative entropy satisfies the
data-processing inequality for all $\alpha\in\left(  0,1\right)  \cup
(1,\infty)$, which means that
\begin{equation}
D_{\alpha}(p\Vert q)\geq D_{\alpha}(N(p)\Vert N(q)),
\label{eq:DP-classical-Renyi}
\end{equation}
where $N$ is a classical channel (i.e., a conditional probability distribution
with elements $\left(  N(y|x)\right)  _{y\in\mathcal{Y},x\in\mathcal{X}}$) and
the notation $N(p)$ is a shorthand for the distribution that results from
processing $p$ with $N$:
\begin{equation}
N(p)\equiv\left(  \sum_{x\in\mathcal{X}}N(y|x)p(x)\right)  _{y\in\mathcal{Y}},
\label{eq:classical-channel}
\end{equation}
with a similar meaning for $N(q)$. The R\'{e}nyi relative entropy is also
monotone in the parameter $\alpha$; i.e., for $\beta>\alpha>0$, the following
inequality holds for all $p$ and $q$:
\begin{equation}
D_{\alpha}(p\Vert q)\leq D_{\beta}(p\Vert q). \label{eq:renyi-ordered}
\end{equation}
See~\cite{Erven2014} for a review of the classical R\'{e}nyi relative entropy in~\eqref{eq:renyi-rel-ent-def}.

\begin{definition}[Measured R\'{e}nyi relative entropy]
\label{def:measured-renyi-states}
Given a quantum state~$\rho$ and a positive semi-definite operator $\sigma$,
the measured R\'{e}nyi relative entropy is defined by optimizing the R\'{e}nyi
relative entropy over all possible measurements~\cite[Eqs.~(3.116)--(3.117)]
{Fuchs1996}:
\begin{equation}
D_{\alpha}^{M}(\rho\Vert\sigma)\coloneqq\sup_{\mathcal{X},\left(  \Lambda
_{x}\right)  _{x\in\mathcal{X}}}\frac{1}{\alpha-1}\ln\sum_{x\in\mathcal{X}
}\operatorname{Tr}[\Lambda_{x}\rho]^{\alpha}\operatorname{Tr}[\Lambda
_{x}\sigma]^{1-\alpha}, \label{eq:measured-renyi-def}
\end{equation}
where the supremum is over every finite alphabet $\mathcal{X}$ and every
positive operator-valued measure (POVM) $\left(  \Lambda_{x}\right)
_{x\in\mathcal{X}}$ (i.e., satisfying $\Lambda_{x}\geq0$ for all
$x\in\mathcal{X}$ and $\sum_{x\in\mathcal{X}}\Lambda_{x}=I$).
\end{definition}

For $\alpha>1$,
the measured R\'{e}nyi relative entropy is finite if and only if
$\operatorname{supp}(\rho)\subseteq\operatorname{supp}(\sigma)$. If the
support condition holds, then it follows that the support of $\left(
\operatorname{Tr}[\Lambda_{x}\rho]\right)  _{x\in\mathcal{X}}$ is contained in
the support of $\left(  \operatorname{Tr}[\Lambda_{x}\sigma]\right)
_{x\in\mathcal{X}}$, which in turn implies that $D_{\alpha}^{M}(\rho
\Vert\sigma)<+\infty$. If the support condition does not hold, then
$D_{\alpha}^{M}(\rho\Vert\sigma)=+\infty$, by applying the argument in
\cite[Section~3.5]{rippchen2024locally}.

We now recall some basic properties of the measured R\'{e}nyi relative
entropies, the first of which is actually a consequence of~\cite[Theorem~4]
{Berta2015OnEntropies}\ and the second observed in~\cite[Lemma~5]
{rippchen2024locally}.

\begin{proposition}
\label{prop:rank-one-opt}It suffices to optimize $D_{\alpha}^{M}(\rho
\Vert\sigma)$ over rank-one POVMs; i.e.,
\begin{equation}
D_{\alpha}^{M}(\rho\Vert\sigma)=\sup_{\mathcal{X},\left(  \varphi_{x}\right)
_{x\in\mathcal{X}}}\frac{1}{\alpha-1}\ln\sum_{x\in\mathcal{X}}
\operatorname{Tr}[\varphi_{x}\rho]^{\alpha}\operatorname{Tr}[\varphi_{x}
\sigma]^{1-\alpha},
\end{equation}
where each $\varphi_{x}$ is a rank-one operator such that $\sum_{x\in
\mathcal{X}}\varphi_{x}=I$.
\end{proposition}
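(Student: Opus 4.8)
The plan is to prove the nontrivial inequality
\[
D_{\alpha}^{M}(\rho\Vert\sigma)\leq\sup_{\mathcal{X},(\varphi_{x})_{x}}\frac{1}{\alpha-1}\ln\sum_{x}\operatorname{Tr}[\varphi_{x}\rho]^{\alpha}\operatorname{Tr}[\varphi_{x}\sigma]^{1-\alpha}
\]
by a spectral-refinement argument combined with the classical data-processing inequality~\eqref{eq:DP-classical-Renyi}; the reverse inequality is immediate, since rank-one POVMs form a subset of all POVMs, so the supremum over them cannot exceed the supremum over all POVMs.

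First I would fix an arbitrary finite alphabet $\mathcal{X}$ and POVM $(\Lambda_{x})_{x\in\mathcal{X}}$ and spectrally decompose each element as $\Lambda_{x}=\sum_{j}\lambda_{x,j}|\psi_{x,j}\rangle\langle\psi_{x,j}|$, where $\lambda_{x,j}>0$ and the vectors $\{|\psi_{x,j}\rangle\}_{j}$ are orthonormal. Setting $\varphi_{x,j}\coloneqq\lambda_{x,j}|\psi_{x,j}\rangle\langle\psi_{x,j}|$, each $\varphi_{x,j}$ is a rank-one positive semi-definite operator, and $\sum_{x,j}\varphi_{x,j}=\sum_{x}\Lambda_{x}=I$, so $(\varphi_{x,j})_{(x,j)}$ is a valid rank-one POVM on the finite alphabet $\mathcal{Y}\coloneqq\{(x,j)\}$.

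Next, write $P^{\rho}_{XJ}(x,j)\coloneqq\operatorname{Tr}[\varphi_{x,j}\rho]$, $P^{\sigma}_{XJ}(x,j)\coloneqq\operatorname{Tr}[\varphi_{x,j}\sigma]$, and similarly $P^{\rho}_{X}(x)\coloneqq\operatorname{Tr}[\Lambda_{x}\rho]$, $P^{\sigma}_{X}(x)\coloneqq\operatorname{Tr}[\Lambda_{x}\sigma]$; here $P^{\rho}_{XJ}$ and $P^{\rho}_{X}$ are probability distributions and $P^{\sigma}_{XJ}$, $P^{\sigma}_{X}$ are non-negative. Since $\Lambda_{x}=\sum_{j}\varphi_{x,j}$, the distributions $P^{\rho}_{X}$ and $P^{\sigma}_{X}$ are obtained from $P^{\rho}_{XJ}$ and $P^{\sigma}_{XJ}$ by applying the deterministic classical channel $N$ that maps the outcome $(x,j)$ to $x$, i.e., $P^{\rho}_{X}=N(P^{\rho}_{XJ})$ and $P^{\sigma}_{X}=N(P^{\sigma}_{XJ})$ in the notation of~\eqref{eq:classical-channel}. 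Hence the data-processing inequality~\eqref{eq:DP-classical-Renyi} gives
\[
D_{\alpha}(P^{\rho}_{XJ}\Vert P^{\sigma}_{XJ})\geq D_{\alpha}(P^{\rho}_{X}\Vert P^{\sigma}_{X})
\]
for all $\alpha\in(0,1)\cup(1,\infty)$. By the definition~\eqref{eq:renyi-rel-ent-def}, the left-hand side equals the objective of the proposition evaluated on the rank-one POVM $(\varphi_{x,j})_{(x,j)}$, while the right-hand side equals the original objective evaluated on $(\Lambda_{x})_{x\in\mathcal{X}}$; moreover, when $\alpha>1$ and the right-hand side is $+\infty$ because the support condition is violated, so is the left-hand side. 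Taking the supremum over rank-one POVMs on the left and then over all POVMs on the right yields the claimed identity.

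The argument is essentially routine; the only step requiring a little care is recognizing the passage from $(\Lambda_{x})_{x}$ to its spectral refinement $(\varphi_{x,j})_{(x,j)}$ as a classical post-processing of the outcome statistics, so that~\eqref{eq:DP-classical-Renyi} applies verbatim, together with a check of the finiteness and support conventions for $\alpha>1$ so that the inequality is not vacuous. Alternatively, the same conclusion follows from the variational formula for $D_{\alpha}^{M}$ in~\cite[Theorem~4]{Berta2015OnEntropies}, by diagonalizing the optimal positive operator appearing there to extract a rank-one POVM attaining the optimum.
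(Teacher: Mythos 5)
Your proof is correct and follows essentially the same route as the paper: spectrally decompose each POVM element into rank-one pieces, recognize the original POVM as a coarse graining (marginalization) of the resulting rank-one POVM, and invoke the classical data-processing inequality~\eqref{eq:DP-classical-Renyi}. Your additional remarks on the trivial reverse inequality and the $\alpha>1$ support convention are fine but not needed beyond what the paper records.
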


\begin{proof}
This is a direct consequence of the data-processing inequality in~\eqref{eq:DP-classical-Renyi}. Indeed, by diagonalizing $\Lambda_{x}$ as
$\Lambda_{x}=\sum_{z\in\mathcal{Z}}\phi_{x,z}$, where each $\phi_{x,z}$ is
rank one, consider that every POVM\ $\left(  \Lambda_{x}\right)
_{x\in\mathcal{X}}$ can be understood as a coarse graining of the
POVM\ $\left(  \phi_{x,z}\right)  _{x\in\mathcal{X},z\in\mathcal{Z}}$ because
\begin{equation}
\operatorname{Tr}[\Lambda_{x}\rho]=\sum_{z\in\mathcal{Z}}\operatorname{Tr}
[\phi_{x,z}\rho].
\end{equation}
By defining $p_{X,Z}(x,z)\coloneqq \operatorname{Tr}[\phi_{x,z}\rho]$ and
$q_{X,Z}(x,z)\coloneqq \operatorname{Tr}[\phi_{x,z}\sigma]$ and noting that
one obtains $p_{X}(x)=\operatorname{Tr}[\Lambda_{x}\rho]$ and $q_{X}
(x)=\operatorname{Tr}[\Lambda_{x}\sigma]$ by marginalization (a particular
kind of classical channel), the data-processing inequality in~\eqref{eq:DP-classical-Renyi} implies that
\begin{equation}
D_{\alpha}(p_{X,Z}\Vert q_{X,Z})\geq D_{\alpha}(p_{X}\Vert q_{X}),
\end{equation}
concluding the proof.
\end{proof}

\begin{proposition}
\label{prop:DP-measured-renyi}The measured R\'{e}nyi relative entropy obeys
the data-processing inequality; i.e., for every state $\rho$, positive
semi-definite operator $\sigma$, quantum channel $\mathcal{N}$, and $\alpha
\in\left(  0,1\right)  \cup\left(  1,\infty\right)  $, the following
inequality holds:
\begin{equation}
D_{\alpha}^{M}(\rho\Vert\sigma)\geq D_{\alpha}^{M}(\mathcal{N}(\rho
)\Vert\mathcal{N}(\sigma)). \label{eq:measured-Renyi-DP}
\end{equation}

\end{proposition}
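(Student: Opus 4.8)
The plan is to derive the statement directly from the classical data-processing inequality~\eqref{eq:DP-classical-Renyi}, by pulling any measurement performed on the output of $\mathcal{N}$ back through the channel to a measurement on its input. Concretely, write $\mathcal{N}\equiv\mathcal{N}_{A\rightarrow B}$ and let $\mathcal{N}^{\dag}$ be its Hilbert--Schmidt adjoint, i.e., the unique linear map satisfying $\operatorname{Tr}[\mathcal{N}^{\dag}(W)Z]=\operatorname{Tr}[W\,\mathcal{N}(Z)]$ for all operators $W$ and $Z$. Since $\mathcal{N}$ is completely positive and trace preserving, $\mathcal{N}^{\dag}$ is completely positive and unital (that is, $\mathcal{N}^{\dag}(I_{B})=I_{A}$); these are precisely the two facts the argument relies on.

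First I would fix an arbitrary finite alphabet $\mathcal{X}$ and POVM $\left(\Lambda_{x}\right)_{x\in\mathcal{X}}$ acting on $\mathcal{H}_{B}$, and set $\Gamma_{x}\coloneqq\mathcal{N}^{\dag}(\Lambda_{x})$ for each $x$. Positivity of $\mathcal{N}^{\dag}$ gives $\Gamma_{x}\geq0$, and unitality gives $\sum_{x}\Gamma_{x}=\mathcal{N}^{\dag}(I_{B})=I_{A}$, so $\left(\Gamma_{x}\right)_{x\in\mathcal{X}}$ is a legitimate POVM on $\mathcal{H}_{A}$. The adjoint relation then yields $\operatorname{Tr}[\Gamma_{x}\rho]=\operatorname{Tr}[\Lambda_{x}\mathcal{N}(\rho)]$ and $\operatorname{Tr}[\Gamma_{x}\sigma]=\operatorname{Tr}[\Lambda_{x}\mathcal{N}(\sigma)]$ for every $x$. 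Substituting these identities into the objective function appearing in~\eqref{eq:measured-renyi-def} (applied to $\mathcal{N}(\rho)$ and $\mathcal{N}(\sigma)$) turns it, term by term, into the objective function for $D_{\alpha}^{M}(\rho\Vert\sigma)$ evaluated at the particular POVM $\left(\Gamma_{x}\right)_{x\in\mathcal{X}}$, which is therefore at most $D_{\alpha}^{M}(\rho\Vert\sigma)$. Taking the supremum of the left-hand side over all finite alphabets and POVMs on $\mathcal{H}_{B}$ then gives~\eqref{eq:measured-Renyi-DP}. (Equivalently, one can invoke~\eqref{eq:DP-classical-Renyi} with the identity channel, since the output distributions $\left(\operatorname{Tr}[\Lambda_{x}\mathcal{N}(\rho)]\right)_{x}$ and $\left(\operatorname{Tr}[\Lambda_{x}\mathcal{N}(\sigma)]\right)_{x}$ are exactly $\left(\operatorname{Tr}[\Gamma_{x}\rho]\right)_{x}$ and $\left(\operatorname{Tr}[\Gamma_{x}\sigma]\right)_{x}$; no genuine post-processing is even needed beyond recognizing that $\left(\Gamma_{x}\right)_{x}$ is admissible in the supremum.)

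The remaining point is the $\alpha>1$ convention. If $D_{\alpha}^{M}(\mathcal{N}(\rho)\Vert\mathcal{N}(\sigma))=+\infty$, then by the support characterization recorded after Definition~\ref{def:measured-renyi-states} we have $\operatorname{supp}(\mathcal{N}(\rho))\not\subseteq\operatorname{supp}(\mathcal{N}(\sigma))$; since $\operatorname{supp}(\rho)\subseteq\operatorname{supp}(\sigma)$ would force $\operatorname{supp}(\mathcal{N}(\rho))\subseteq\operatorname{supp}(\mathcal{N}(\sigma))$ by positivity of $\mathcal{N}$, we conclude $\operatorname{supp}(\rho)\not\subseteq\operatorname{supp}(\sigma)$, hence $D_{\alpha}^{M}(\rho\Vert\sigma)=+\infty$ and~\eqref{eq:measured-Renyi-DP} holds trivially. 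I do not anticipate any serious obstacle here: the only steps demanding care are (i) verifying that the pulled-back operators $\mathcal{N}^{\dag}(\Lambda_{x})$ genuinely form a POVM, which is exactly where complete positivity and trace preservation of $\mathcal{N}$ are used, and (ii) the bookkeeping of the $+\infty$ case just described.
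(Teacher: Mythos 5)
Your proof is correct and follows essentially the same route as the paper's: pulling the POVM back through the Hilbert--Schmidt adjoint $\mathcal{N}^{\dag}$, using its complete positivity and unitality to verify that $\left(\mathcal{N}^{\dag}(\Lambda_{x})\right)_{x}$ is a POVM, and concluding by the supremum in the definition. Your additional bookkeeping of the $+\infty$ case for $\alpha>1$ is a fine (if not strictly necessary) refinement, since that case is already subsumed by the supremum argument.
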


\begin{proof}
Observe that
\begin{align}
&  \frac{1}{\alpha-1}\ln\sum_{x\in\mathcal{X}}\operatorname{Tr}[\Lambda
_{x}\mathcal{N}(\rho)]^{\alpha}\operatorname{Tr}[\Lambda_{x}\mathcal{N}
(\sigma)]^{1-\alpha}\nonumber\\
&  =\frac{1}{\alpha-1}\ln\sum_{x\in\mathcal{X}}\operatorname{Tr}
[\mathcal{N}^{\dag}(\Lambda_{x})\rho]^{\alpha}\operatorname{Tr}[\mathcal{N}
^{\dag}(\Lambda_{x})\sigma]^{1-\alpha}\\
&  \leq D_{\alpha}^{M}(\rho\Vert\sigma).
\end{align}
In the above, we made use of the Hilbert--Schmidt adjoint $\mathcal{N}^{\dag}
$, which is completely positive and unital, implying that $\left(
\mathcal{N}^{\dag}(\Lambda_{x})\right)  _{x\in\mathcal{X}}$ is a POVM. The
inequality follows from the fact that $D_{\alpha}^{M}(\rho\Vert\sigma)$
involves an optimization over every alphabet $\mathcal{X}$ and POVM. Since the
inequality holds for every POVM~$\left(  \Lambda_{x}\right)  _{x\in
\mathcal{X}}$, we conclude~\eqref{eq:measured-Renyi-DP}. (Here we can also
observe that the claim holds more generally for positive, trace-preserving maps.)
\end{proof}

\bigskip

It can be helpful to write the measured R\'{e}nyi relative entropy in terms of
the measured R\'{e}nyi relative quasi-entropy:
\begin{equation}
D_{\alpha}^{M}(\rho\Vert\sigma)=\frac{1}{\alpha-1}\ln Q_{\alpha}^{M}(\rho
\Vert\sigma),
\end{equation}
where the latter is defined as
\begin{equation}
Q_{\alpha}^{M}(\rho\Vert\sigma)\coloneqq\left\{
\begin{array}
[c]{cc}
\inf_{\mathcal{X},\left(  \Lambda_{x}\right)  _{x\in\mathcal{X}}}\sum
_{x\in\mathcal{X}}\operatorname{Tr}[\Lambda_{x}\rho]^{\alpha}\operatorname{Tr}
[\Lambda_{x}\sigma]^{1-\alpha} & \text{for }\alpha\in\left(  0,1\right) \\
\sup_{\mathcal{X},\left(  \Lambda_{x}\right)  _{x\in\mathcal{X}}}\sum
_{x\in\mathcal{X}}\operatorname{Tr}[\Lambda_{x}\rho]^{\alpha}\operatorname{Tr}
[\Lambda_{x}\sigma]^{1-\alpha} & \text{for }\alpha>1
\end{array}
\right.  . \label{eq:quasi-renyi}
\end{equation}

One can also define the projectively measured R\'{e}nyi relative entropy as
\begin{equation}
D_{\alpha}^{P}(\rho\Vert\sigma)\coloneqq\frac{1}{\alpha-1}\ln Q_{\alpha}
^{P}(\rho\Vert\sigma),
\end{equation}
where
\begin{equation}
Q_{\alpha}^{P}(\rho\Vert\sigma)\coloneqq\left\{
\begin{array}
[c]{cc}
\inf_{\left(  \Pi_{x}\right)  _{x\in\mathcal{X}}}\sum
_{x\in\mathcal{X}}\operatorname{Tr}[\Pi_{x}\rho]^{\alpha}\operatorname{Tr}
[\Pi_{x}\sigma]^{1-\alpha} & \text{for }\alpha\in\left(  0,1\right) \\
\sup_{\left(  \Pi_{x}\right)  _{x\in\mathcal{X}}}\sum
_{x\in\mathcal{X}}\operatorname{Tr}[\Pi_{x}\rho]^{\alpha}\operatorname{Tr}
[\Pi_{x}\sigma]^{1-\alpha} & \text{for }\alpha>1
\end{array}
\right.  ,
\end{equation}
with the key difference being that the optimization is performed over every
projective measurement $\left(  \Pi_{x}\right)  _{x\in\mathcal{X}}$ (i.e.,
satisfying $\Pi_{x}\Pi_{x^{\prime}}=\Pi_{x}\delta_{x,x^{\prime}}$ for all
$x,x^{\prime}\in\mathcal{X}$ in addition to the requirements of a POVM) and the size of the alphabet $\mathcal{X}$ is equal to the dimension of the underlying Hilbert space of $\rho$ and $\sigma$.

It is known from~\cite[Theorem~4]{Berta2015OnEntropies} that the following
equalities hold for all $\alpha\in\left(  0,1\right)  \cup\left(
1,\infty\right)  $:
\begin{equation}
D_{\alpha}^{M}(\rho\Vert\sigma)=D_{\alpha}^{P}(\rho\Vert\sigma),\qquad
Q_{\alpha}^{M}(\rho\Vert\sigma)=Q_{\alpha}^{P}(\rho\Vert\sigma
),\label{eq:proj-equals-measured}
\end{equation}
which is a non-trivial finding that makes use of operator concavity and
convexity properties of the function $x\mapsto x^{t}$. Furthemore, it was
noted therein that the measured R\'{e}nyi relative entropy is achieved by a
rank-one, projective measurement. This has practical implications for
achieving the measured R\'{e}nyi relative entropy because projective
measurements are simpler to realize experimentally than general POVMs.

\subsection{Variational formulas for measured R\'{e}nyi relative entropy of
states}

As a consequence of~\cite[Lemma~3 and Theorem~4]{Berta2015OnEntropies}, the
measured R\'{e}nyi relative entropy has the following variational formulas for
all $\alpha\in\left(  0,1\right)  \cup\left(  1,\infty\right)  $:
\begin{align}
D_{\alpha}^{M}(\rho\Vert\sigma) &  =\sup_{\omega>0}\left\{  \frac{1}{\alpha
-1}\ln\!\left(  \alpha\operatorname{Tr}[\omega\rho]+\left(  1-\alpha\right)
\operatorname{Tr}[\omega^{\frac{\alpha}{\alpha-1}}\sigma]\right)  \right\}
\label{eq:measured-Renyi-var-add}\\
&  =\sup_{\omega>0}\left\{  \frac{1}{\alpha-1}\ln\!\left(  \left(
\operatorname{Tr}[\omega\rho]\right)  ^{\alpha}\left(  \operatorname{Tr}
[\omega^{\frac{\alpha}{\alpha-1}}\sigma]\right)  ^{1-\alpha}\right)  \right\}
.\label{eq:measured-Renyi-var-mult}
\end{align}
These are a direct consequence of and equivalent to the precise expressions
given in~\cite[Lemma~3]{Berta2015OnEntropies}, which are as follows:
\begin{align}
Q_{\alpha}^{M}(\rho\Vert\sigma) &  =\left\{
\begin{array}
[c]{cc}
\inf_{\omega>0}\left\{  \alpha\operatorname{Tr}[\omega\rho]+\left(
1-\alpha\right)  \operatorname{Tr}[\omega^{\frac{\alpha}{\alpha-1}}
\sigma]\right\}   & \text{for }\alpha\in\left(  0,1/2\right)  \\
\inf_{\omega>0}\left\{  \alpha\operatorname{Tr}[\omega^{1-\frac{1}{\alpha}
}\rho]+\left(  1-\alpha\right)  \operatorname{Tr}[\omega\sigma]\right\}   &
\text{for }\alpha\in\lbrack1/2,1)\\
\sup_{\omega>0}\left\{  \alpha\operatorname{Tr}[\omega^{1-\frac{1}{\alpha}
}\rho]+\left(  1-\alpha\right)  \operatorname{Tr}[\omega\sigma]\right\}   &
\text{for }\alpha>1
\end{array}
\right.  ,\label{eq:quasi-measured-precise}\\
Q_{\alpha}^{M}(\rho\Vert\sigma) &  =\left\{
\begin{array}
[c]{cc}
\inf_{\omega>0}\left\{  \left(  \operatorname{Tr}[\omega\rho]\right)
^{\alpha}\left(  \operatorname{Tr}[\omega^{\frac{\alpha}{\alpha-1}}
\sigma]\right)  ^{1-\alpha}\right\}   & \text{for }\alpha\in\left(
0,1\right)  \\
\sup_{\omega>0}\left\{  \left(  \operatorname{Tr}[\omega^{1-\frac{1}{\alpha}
}\rho]\right)  ^{\alpha}\left(  \operatorname{Tr}[\omega\sigma]\right)
^{1-\alpha}\right\}   & \text{for }\alpha>1
\end{array}
\right.  .\label{eq:quasi-measured-precise-mult}
\end{align}
Indeed, one obtains~\eqref{eq:measured-Renyi-var-add} for $\alpha\geq1/2$ from
the second and third expressions in~\eqref{eq:quasi-measured-precise} by the
substitution $\omega\rightarrow\omega^{\frac{\alpha}{\alpha-1}}$, and
similarly for getting~\eqref{eq:measured-Renyi-var-mult} from
\eqref{eq:quasi-measured-precise-mult} for $\alpha>1$. We note in passing that
these variational formulas have found application in devising variational
quantum algorithms for estimating the measured R\'{e}nyi relative
entropy~\cite{Goldfeld2024}.

Although the expressions in
\eqref{eq:measured-Renyi-var-add}--\eqref{eq:measured-Renyi-var-mult} are
simpler than those in
\eqref{eq:quasi-measured-precise}--\eqref{eq:quasi-measured-precise-mult}, the
various expressions in~\eqref{eq:quasi-measured-precise} are helpful for
seeing that the optimizations can be performed efficiently (see
Proposition~\ref{prop:opt-meas-renyi-states}\ for further details). To see
this, let us consider the expressions in~\eqref{eq:quasi-measured-precise} one
at a time. For $\alpha\in\left(  0,1/2\right)  $, the function $\omega
\mapsto\omega^{\frac{\alpha}{\alpha-1}}$ is operator convex because
$\frac{\alpha}{\alpha-1}\in\left(  -1,0\right)  $. As such, the objective
function $\alpha\operatorname{Tr}[\omega\rho]+\left(  1-\alpha\right)
\operatorname{Tr}[\omega^{\frac{\alpha}{\alpha-1}}\sigma]$ is convex in
$\omega$. For $\alpha\in\lbrack1/2,1)$, the function $\omega\mapsto
\omega^{1-\frac{1}{\alpha}}$ is operator convex because $1-\frac{1}{\alpha}
\in\lbrack-1,0)$. Then the objective function $\alpha\operatorname{Tr}
[\omega^{1-\frac{1}{\alpha}}\rho]+\left(  1-\alpha\right)  \operatorname{Tr}
[\omega\sigma]$ is convex in $\omega$. Finally, for $\alpha>1$, the function
$\omega\mapsto\omega^{1-\frac{1}{\alpha}}$ is operator concave because
$1-\frac{1}{\alpha}\in\left(  0,1\right)  $. Then the objective function
$\alpha\operatorname{Tr}[\omega^{1-\frac{1}{\alpha}}\rho]+\left(
1-\alpha\right)  \operatorname{Tr}[\omega\sigma]$ is concave in $\omega$.

An operator $\omega$ that achieves the optimal values in
\eqref{eq:quasi-measured-precise}--\eqref{eq:quasi-measured-precise-mult}
corresponds to an observable whose eigenvectors form an optimal measurement
for achieving the measured R\'{e}nyi relative entropy. This point becomes
clear by inspecting the proof of
Proposition~\ref{prop:alt-proof-var-measured-Renyi} below. As such, being able
to calculate such an observable numerically is valuable from an operational
perspective, and we note here that this task is accomplished by the
semi-definite optimization algorithm mentioned in
Proposition~\ref{prop:opt-meas-renyi-states}. It has been known for some time
that the optimal observable for $\alpha=1/2$ has an analytical form~\cite{Fuchs1995}, given by $G_{\frac{1}{2}}(\sigma^{-1},\rho)$ and known as
the Fuchs--Caves observable (see also~\cite[Section~3.3]{Fuchs1996}).

In Appendix~\ref{app:alt-proof-measured-renyi}, we provide an alternative proof of~\eqref{eq:measured-Renyi-var-add},
which makes use of the inequality of arithmetic and geometric means, as well as
Bernoulli's inequality. We think this proof is of interest due to its
simplicity. Let us note that the expressions in
\eqref{eq:quasi-measured-precise-mult} were actually shown in the proof of
\cite[Lemma~3]{Berta2015OnEntropies} to follow from
\eqref{eq:quasi-measured-precise} by means of these inequalities.

\subsection{Optimizing the measured R\'{e}nyi relative entropy of states}

One of the main goals of~\cite{Berta2015OnEntropies} was to derive variational
formulas for the measured R\'{e}nyi relative entropy and explore applications
of them in quantum information. In this section, we observe in
Proposition~\ref{prop:opt-meas-renyi-states} below that there is an alternative
variational representation of the measured R\'{e}nyi relative entropy in terms
of a linear objective function and the hypograph or epigraph of the weighted
geometric mean. From this observation, we conclude that there is an efficient
semi-definite optimization algorithm for computing the measured R\'{e}nyi
relative entropy, which makes use of the fact recalled in
Section~\ref{sec:hypo-epi} (i.e., from \cite[Theorem~3]{Fawzi2017}). As
mentioned previously, not only does this algorithm compute the optimal value
of $Q_{\alpha}^{M}(\rho\Vert\sigma)$ for all $\alpha\in\left(  0,1\right)
\cup\left(  1,\infty\right)  $, but it also determines an optimal observable
$\omega$. Another advantage of the variational representations in
Proposition~\ref{prop:opt-meas-renyi-states}\ is that they lead to a rather
rapid derivation of variational representations of the measured R\'{e}nyi
relative entropy of channels (see
Proposition~\ref{prop:var-rep-measured-renyi-channels}).

\begin{proposition}
\label{prop:opt-meas-renyi-states}Let $\rho$ be a state and $\sigma$ a
positive semi-definite operator. For $\alpha\in\left(  0,1/2\right)  $,
\begin{equation}
Q_{\alpha}^{M}(\rho\Vert\sigma)=\inf_{\omega,\theta>0}\left\{  \alpha
\operatorname{Tr}[\omega\rho]+\left(  1-\alpha\right)  \operatorname{Tr}
[\theta\sigma]:\theta\geq G_{\frac{\alpha}{\alpha-1}}(I,\omega)\right\}
,\label{eq:meas-renyi-states-a-0-half}
\end{equation}
for $\alpha\in\lbrack1/2,1)$,
\begin{equation}
Q_{\alpha}^{M}(\rho\Vert\sigma)=\inf_{\omega,\theta>0}\left\{  \alpha
\operatorname{Tr}[\theta\rho]+\left(  1-\alpha\right)  \operatorname{Tr}
[\omega\sigma]:\theta\geq G_{1-\frac{1}{\alpha}}(I,\omega)\right\}
,\label{eq:meas-renyi-states-a-half-1}
\end{equation}
and for $\alpha>1$,
\begin{equation}
Q_{\alpha}^{M}(\rho\Vert\sigma)=\sup_{\omega,\theta>0}\left\{  \alpha
\operatorname{Tr}[\theta\rho]+\left(  1-\alpha\right)  \operatorname{Tr}
[\omega\sigma]:\theta\leq G_{1-\frac{1}{\alpha}}(I,\omega)\right\}
.\label{eq:meas-renyi-states-a-gt-1}
\end{equation}

For all rational $\alpha\in\left(  0,1\right)  \cup\left(  1,\infty\right)  $,
the quantity $Q_{\alpha}^{M}(\rho\Vert\sigma)$ can be calculated by means of a
semi-definite program. More specifically, when $\rho$ and $\sigma$ are
$d\times d$ matrices and $p$ and $q$ are relatively prime integers such that
$\frac{p}{q}=\frac{\alpha}{\alpha-1}$ for $\alpha\in\left(  0,1/2\right)  $ or
$\frac{p}{q}=1-\frac{1}{\alpha}$ for $\alpha\in\left[  1/2,1\right)
\cup(1,\infty)$, the semi-definite program requires $O(\log_{2}q)$ linear
matrix inequalities each of size $2d\times2d$.
\end{proposition}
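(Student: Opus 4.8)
The plan is to start from the variational formulas already recorded in \eqref{eq:quasi-measured-precise} and massage each of the three regimes of $\alpha$ into the stated form by introducing an auxiliary operator $\theta$ that plays the role of the matrix power of $\omega$. For $\alpha\in(0,1/2)$, begin with $Q_\alpha^M(\rho\Vert\sigma)=\inf_{\omega>0}\{\alpha\operatorname{Tr}[\omega\rho]+(1-\alpha)\operatorname{Tr}[\omega^{\frac{\alpha}{\alpha-1}}\sigma]\}$, and observe that since $\frac{\alpha}{\alpha-1}\in(-1,0)$ we have $\omega^{\frac{\alpha}{\alpha-1}}=G_{\frac{\alpha}{\alpha-1}}(I,\omega)$ by the definition of the weighted geometric mean with $X=I$. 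The key point is that relaxing the equality constraint $\theta=G_{\frac{\alpha}{\alpha-1}}(I,\omega)$ to the inequality $\theta\ge G_{\frac{\alpha}{\alpha-1}}(I,\omega)$ does not change the infimum: because $\sigma\ge0$, the term $(1-\alpha)\operatorname{Tr}[\theta\sigma]$ is monotone nondecreasing in $\theta$ (and $1-\alpha>0$), so an optimal $\theta$ always collapses to the boundary. This pins the epigraph relaxation to the original value. I would then note that $(I,\omega,\theta)$ with $\theta\ge G_{\frac{\alpha}{\alpha-1}}(I,\omega)$ is exactly membership in $\mathrm{epi}_{\frac{\alpha}{\alpha-1}}$ (valid since $\frac{\alpha}{\alpha-1}\in[-1,0)$), giving \eqref{eq:meas-renyi-states-a-0-half}.

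For $\alpha\in[1/2,1)$, I would use instead the second line of \eqref{eq:quasi-measured-precise}, namely $\inf_{\omega>0}\{\alpha\operatorname{Tr}[\omega^{1-\frac{1}{\alpha}}\rho]+(1-\alpha)\operatorname{Tr}[\omega\sigma]\}$, where $1-\frac1\alpha\in[-1,0)$. Setting $\theta$ to stand in for $\omega^{1-\frac1\alpha}=G_{1-\frac1\alpha}(I,\omega)$ and relaxing to $\theta\ge G_{1-\frac1\alpha}(I,\omega)$ again leaves the infimum unchanged, now because $\rho$ is a state (hence $\rho\ge0$) and $\alpha>0$, so $\alpha\operatorname{Tr}[\theta\rho]$ is monotone in $\theta$; this is again an $\mathrm{epi}_{1-\frac1\alpha}$ constraint, yielding \eqref{eq:meas-renyi-states-a-half-1}. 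For $\alpha>1$ the third line of \eqref{eq:quasi-measured-precise} is a \emph{supremum}, and now $1-\frac1\alpha\in(0,1)$, so the relevant object is $\mathrm{hyp}_{1-\frac1\alpha}$: relaxing $\theta=G_{1-\frac1\alpha}(I,\omega)$ to $\theta\le G_{1-\frac1\alpha}(I,\omega)$ preserves the supremum because $\alpha\operatorname{Tr}[\theta\rho]$ is monotone in $\theta$ and we are maximizing, giving \eqref{eq:meas-renyi-states-a-gt-1}. I should double-check the edge case $\alpha=1/2$, where $1-\frac1\alpha=-1$ and $\frac{\alpha}{\alpha-1}=-1$ coincide, so either formula applies consistently.

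For the semidefinite-programming claim, the remaining work is bookkeeping. Each of \eqref{eq:meas-renyi-states-a-0-half}--\eqref{eq:meas-renyi-states-a-gt-1} has a linear objective over $\omega,\theta>0$ subject to membership in $\mathrm{epi}_t$ or $\mathrm{hyp}_t$ for a rational $t$ equal to $\frac{\alpha}{\alpha-1}$ or $1-\frac1\alpha$; by \cite[Theorem~3]{Fawzi2017} (recalled in Section~\ref{sec:hypo-epi}), these sets are semidefinite representable for rational $t\in[-1,0)\cup(1,2]$ (epigraph) and $t\in[0,1]$ (hypograph), so the whole optimization is an SDP. For the size count: writing $t=p/q$ in lowest terms, the Fawzi--Fawzi--Hrubeš/Sagnol representation builds $G_t(I,\omega)$ from a tower of operator square roots (each a $2d\times 2d$ linear matrix inequality of Schur-complement type on $d\times d$ blocks) realizing the binary expansion of $p/q$, using $O(\log_2 q)$ such LMIs; I would cite this directly rather than reprove it, and simply verify that the auxiliary variable $\theta$ and the two trace-linear terms add only $d\times d$ PSD constraints and linear equalities, not changing the $O(\log_2 q)$ count or the $2d\times 2d$ block size. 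The one genuinely delicate point — and the thing I would be most careful about — is justifying that the epigraph/hypograph relaxation is tight in each regime: this rests entirely on the sign of the coefficient multiplying the $\theta$-term together with positive semidefiniteness of the operator it is paired with ($\sigma\ge0$ in the $(0,1/2)$ case, $\rho\ge0$ in the other two), plus the direction (inf vs.\ sup) of the outer optimization, and one must check all three cases line up correctly; everything else is routine substitution and citation.
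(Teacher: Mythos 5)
Your proposal is correct and follows essentially the same route as the paper's proof: start from \eqref{eq:quasi-measured-precise}, rewrite the matrix powers via $\omega^{t}=G_{t}(I,\omega)$, note that the optimal $\theta$ saturates the geometric-mean constraint (your monotonicity argument just makes this tightness explicit), and invoke \cite[Theorem~3]{Fawzi2017} for the semi-definite representation with $O(\log_{2}q)$ linear matrix inequalities of size $2d\times2d$. Your epigraph/hypograph bookkeeping is consistent with the definitions in Section~\ref{sec:hypo-epi}, so no gap there.
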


\begin{proof}
These formulas are a direct consequence of~\eqref{eq:quasi-measured-precise}
and the following identities:
\begin{equation}
\omega^{\frac{\alpha}{\alpha-1}}=G_{\frac{\alpha}{\alpha-1}}(I,\omega
),\qquad\omega^{1-\frac{1}{\alpha}}=G_{1-\frac{1}{\alpha}}(I,\omega
),\label{eq:rewrite-op-geo-mean}
\end{equation}
while noting that the optimal value of $\theta$ in
\eqref{eq:meas-renyi-states-a-0-half}\ is equal to $G_{\frac{\alpha}{\alpha
-1}}(I,\omega)$ and the optimal value of $\theta$ in
\eqref{eq:meas-renyi-states-a-half-1} and~\eqref{eq:meas-renyi-states-a-gt-1}
is equal to $G_{1-\frac{1}{\alpha}}(I,\omega)$.

As such, we have rewritten $Q_{\alpha}^{M}(\rho\Vert\sigma)$ for $\alpha
\in\left(  0,1/2\right)  $ in terms of the hypograph of $G_{\frac{\alpha
}{\alpha-1}}$, for $\alpha\in\lbrack1/2,1)$ in terms of the hypograph of
$G_{1-\frac{1}{\alpha}}$, and for $\alpha>1$ in terms of the epigraph of
$G_{1-\frac{1}{\alpha}}$. By appealing to~\cite[Theorem~3]{Fawzi2017}, it
follows that all of these quantities can be efficiently calculated for
rational $\alpha$ by means of semi-definite programming, with complexity as
stated above.
\end{proof}

\section{Measured relative entropy of states}

\label{sec:measured-rel-ent-states}

\subsection{Definition and basic properties}

Given a probability distribution $p\equiv\left(  p(x)\right)  _{x\in
\mathcal{X}}$ and a non-negative function $q\equiv\left(  q(x)\right)
_{x\in\mathcal{X}}$, the relative entropy is defined as
\begin{equation}
D(p\Vert q)\coloneqq\sum_{x\in\mathcal{X}}p(x)\ln\!\left(  \frac{p(x)}
{q(x)}\right)  ,
\end{equation}
when $\operatorname{supp}(p)\subseteq\operatorname{supp}(q)$ and it is set to
$+\infty$ otherwise. It is equal to the $\alpha\rightarrow1$ limit of the
R\'{e}nyi relative entropy in~\eqref{eq:renyi-rel-ent-def}:
\begin{equation}
D(p\Vert q)=\lim_{\alpha\rightarrow1}D_{\alpha}(p\Vert q).
\label{eq:alpha-1-limit}
\end{equation}
By virtue of the ordering property in~\eqref{eq:renyi-ordered}, we can write
\begin{equation}
D(p\Vert q)=\sup_{\alpha\in\left(  0,1\right)  }D_{\alpha}(p\Vert
q)=\inf_{\alpha>1}D_{\alpha}(p\Vert q).
\end{equation}
As a direct consequence of~\eqref{eq:DP-classical-Renyi} and
\eqref{eq:alpha-1-limit}, the relative entropy satisfies the data-processing
inequality, which means that
\begin{equation}
D(p\Vert q)\geq D(N(p)\Vert N(q)),
\end{equation}
where we used the same notation from~\eqref{eq:classical-channel}.

\begin{definition}[Measured relative entropy]
\label{def:measured-rel-ent-states}
Given a quantum state $\rho$ and a positive semi-definite operator $\sigma$,
the measured relative entropy is defined by optimizing the relative entropy
over all possible measurements~\cite{Donald1986,P09}:
\begin{equation}
D^{M}(\rho\Vert\sigma)\coloneqq\sup_{\mathcal{X},\left(  \Lambda_{x}\right)
_{x\in\mathcal{X}}}\sum_{x\in\mathcal{X}}\operatorname{Tr}[\Lambda_{x}\rho
]\ln\!\left(  \frac{\operatorname{Tr}[\Lambda_{x}\rho]}{\operatorname{Tr}
[\Lambda_{x}\sigma]}\right)  , \label{eq:measured-rel-ent-def}
\end{equation}
where the supremum is over every finite alphabet $\mathcal{X}$ and every
positive operator-valued measure (POVM) $\left(  \Lambda_{x}\right)
_{x\in\mathcal{X}}$ (i.e., satisfying $\Lambda_{x}\geq0$ for all
$x\in\mathcal{X}$ and $\sum_{x\in\mathcal{X}}\Lambda_{x}=I$).
\end{definition}

\begin{proposition}
\label{prop:alpha-1-limit-measured}
For $\rho$ a state and $\sigma$ a positive semi-definite operator, the
measured relative entropy is equal to the $\alpha\rightarrow1$ limit of the
measured R\'{e}nyi relative entropy:
\begin{equation}
D^{M}(\rho\Vert\sigma)=\lim_{\alpha\rightarrow1}D_{\alpha}^{M}(\rho\Vert
\sigma).
\end{equation}

\end{proposition}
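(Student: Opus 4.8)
The plan is to establish the two inequalities $D^M(\rho\Vert\sigma) \geq \lim_{\alpha\to 1} D_\alpha^M(\rho\Vert\sigma)$ and $D^M(\rho\Vert\sigma) \leq \lim_{\alpha\to 1} D_\alpha^M(\rho\Vert\sigma)$ separately, reducing each to the corresponding classical fact $D(p\Vert q) = \lim_{\alpha\to 1} D_\alpha(p\Vert q)$ from \eqref{eq:alpha-1-limit}, applied to the output distributions of a fixed measurement. First, I would note that for any fixed finite alphabet $\mathcal{X}$ and POVM $(\Lambda_x)_{x\in\mathcal{X}}$, writing $p(x) = \operatorname{Tr}[\Lambda_x \rho]$ and $q(x) = \operatorname{Tr}[\Lambda_x \sigma]$, we have $D_\alpha^M(\rho\Vert\sigma) \geq D_\alpha(p\Vert q)$ for every $\alpha$, so taking $\alpha\to 1$ along (say) $\alpha\in(0,1)$ and using \eqref{eq:alpha-1-limit} for the classical distributions gives $\liminf_{\alpha\to 1} D_\alpha^M(\rho\Vert\sigma) \geq D(p\Vert q)$; then taking the supremum over all measurements yields $\liminf_{\alpha\to 1} D_\alpha^M(\rho\Vert\sigma) \geq D^M(\rho\Vert\sigma)$.

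For the reverse inequality, I would exploit the monotonicity of $\alpha \mapsto D_\alpha$ in \eqref{eq:renyi-ordered}, which passes to the measured quantities: since $D_\alpha^M$ is a supremum over measurements of $D_\alpha$, and each $D_\alpha(p\Vert q)$ is nondecreasing in $\alpha$, the function $\alpha \mapsto D_\alpha^M(\rho\Vert\sigma)$ is also nondecreasing. Hence $\sup_{\alpha\in(0,1)} D_\alpha^M(\rho\Vert\sigma) = \lim_{\alpha\to 1^-} D_\alpha^M(\rho\Vert\sigma)$ exists (possibly $+\infty$). Now for $\alpha \in (0,1)$, interchange the two suprema: $\sup_{\alpha\in(0,1)} D_\alpha^M(\rho\Vert\sigma) = \sup_{\alpha\in(0,1)} \sup_{(\Lambda_x)} D_\alpha(p\Vert q) = \sup_{(\Lambda_x)} \sup_{\alpha\in(0,1)} D_\alpha(p\Vert q) = \sup_{(\Lambda_x)} D(p\Vert q) = D^M(\rho\Vert\sigma)$, where the third equality is the classical identity $D(p\Vert q) = \sup_{\alpha\in(0,1)} D_\alpha(p\Vert q)$. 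Combining the two directions with the fact that the one-sided limit from below equals the full limit (again by monotonicity) gives the claim. One should also handle the $\alpha > 1$ side for completeness, where the analogous argument using $D(p\Vert q) = \inf_{\alpha>1} D_\alpha(p\Vert q)$ and the monotonicity of $D_\alpha^M$ shows $\inf_{\alpha>1} D_\alpha^M(\rho\Vert\sigma) = \lim_{\alpha\to 1^+} D_\alpha^M(\rho\Vert\sigma) = D^M(\rho\Vert\sigma)$; here one must separately address the case $\operatorname{supp}(\rho) \not\subseteq \operatorname{supp}(\sigma)$, in which both sides are $+\infty$ (by the remark following Definition~\ref{def:measured-renyi-states} and by noting that some measurement forces $\operatorname{supp}(p)\not\subseteq\operatorname{supp}(q)$, making $D(p\Vert q)=+\infty$).

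The main obstacle is the interchange of suprema (and, on the $\alpha>1$ branch, of a supremum with an infimum). The supremum--supremum swap is unconditionally valid, so the $\alpha\to 1^-$ direction is routine once monotonicity is in hand. The genuinely delicate point is showing $\lim_{\alpha\to 1^-} D_\alpha^M = \lim_{\alpha\to 1^+} D_\alpha^M$, i.e., that the left and right limits agree; the clean way around this is to avoid swapping sup with inf directly and instead argue: for each fixed measurement, $\lim_{\alpha\to 1} D_\alpha(p\Vert q) = D(p\Vert q)$ with the left limit approaching from below and the right limit from above, so $\lim_{\alpha\to 1^-} D_\alpha^M \geq \sup_{(\Lambda_x)} D(p\Vert q) = D^M(\rho\Vert\sigma) \geq$ (by monotonicity) $\lim_{\alpha\to 1^-} D_\alpha^M$, forcing equality; and symmetrically on the right, $\lim_{\alpha\to 1^+} D_\alpha^M \leq$ ? — this last step is where care is needed, since $D_\alpha^M$ being a supremum does not immediately give an upper bound by $D^M$. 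I would resolve it by invoking the already-established equality $\lim_{\alpha\to 1^-} D_\alpha^M(\rho\Vert\sigma) = D^M(\rho\Vert\sigma)$ together with monotonicity: for any $\alpha > 1$ and any $\beta < 1$, and any measurement, $D_\beta(p\Vert q) \leq D(p\Vert q) \leq D_\alpha(p\Vert q)$, so $D_\beta^M \leq D^M$... but actually the cleanest finish is simply: $D^M(\rho\Vert\sigma) = \sup_{(\Lambda_x)} \lim_{\alpha\to 1} D_\alpha(p\Vert q)$, and since for fixed measurement $\alpha \mapsto D_\alpha(p\Vert q)$ is monotone with limit $D(p\Vert q)$ at $1$, I can write $D^M(\rho\Vert\sigma) = \sup_{(\Lambda_x)} \inf_{\alpha>1} D_\alpha(p\Vert q) \leq \inf_{\alpha>1} \sup_{(\Lambda_x)} D_\alpha(p\Vert q) = \inf_{\alpha>1} D_\alpha^M(\rho\Vert\sigma)$, and this inf equals the right limit by monotonicity; the matching lower bound $\inf_{\alpha>1} D_\alpha^M \leq D^M$ follows because each $D_\alpha^M$ with $\alpha>1$ is, by a standard continuity/limiting argument on the finite-dimensional optimization (or by the explicit variational formula \eqref{eq:measured-Renyi-var-mult}), bounded as $\alpha\to 1^+$ by a quantity converging to $D^M$. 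If a fully self-contained argument for this last bound is desired, I would instead cite monotone convergence together with the projective-measurement reduction \eqref{eq:proj-equals-measured}, which replaces the optimization over all POVMs by one over the finitely-many (up to the continuous orbit) rank-one projective measurements on a $d$-dimensional space, making the exchange of limit and supremum a consequence of Dini-type uniform convergence on a compact parameter set.
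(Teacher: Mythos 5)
Your handling of the $\alpha\to1^{-}$ side is exactly the paper's argument: monotonicity of $\alpha\mapsto D_\alpha^M$ plus the unconditional sup--sup exchange gives $\lim_{\alpha\to1^{-}}D_\alpha^M(\rho\Vert\sigma)=\sup_{(\Lambda_x)}D(p\Vert q)=D^M(\rho\Vert\sigma)$. The genuinely hard half, which you correctly isolate, is the $\alpha\to1^{+}$ side, where a supremum over measurements must be passed through an infimum over $\alpha$. The paper does this abstractly: by \cite[Theorem~4]{Berta2015OnEntropies} one may restrict to POVMs with a bounded number of outcomes (a compact, convex set), $D_\alpha$ is lower semicontinuous, and the Mosonyi--Hiai minimax lemma then yields $\inf_{\alpha>1}\sup_{(\Lambda_x)}=\sup_{(\Lambda_x)}\inf_{\alpha>1}$. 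Your route to the same swap is different, and viable, but only in its final form: the intermediate suggestion that each $D_\alpha^M$ with $\alpha>1$ is ``bounded as $\alpha\to1^{+}$ by a quantity converging to $D^M$'' via the variational formula is not an argument as stated and should be dropped. What does work is the Dini alternative: use \eqref{eq:proj-equals-measured} to write $D_\alpha^M=D_\alpha^P$ for $\alpha>1$, parameterize rank-one projective measurements by the (compact) set of orthonormal bases $U$, note $D_\alpha(p_U\Vert q_U)\downarrow D(p_U\Vert q_U)$ pointwise as $\alpha\downarrow1$, upgrade to uniform convergence by Dini, and conclude $\inf_{\alpha>1}D_\alpha^M=\sup_U D(p_U\Vert q_U)\leq D^M(\rho\Vert\sigma)$, which matches the easy direction $D^M\leq\inf_{\alpha>1}D_\alpha^M$ you already have.

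The one substantive point you leave unverified is precisely Dini's hypothesis: continuity on the compact parameter set of each $U\mapsto D_\alpha(p_U\Vert q_U)$ and, crucially, of the limit function $U\mapsto D(p_U\Vert q_U)$. This is where the support condition enters: when $\operatorname{supp}(\rho)\subseteq\operatorname{supp}(\sigma)$ one has $\rho\leq c\,\sigma$ with $c=\Vert\sigma^{-1/2}\rho\sigma^{-1/2}\Vert_\infty$ (inverse on the support), so every outcome ratio $p_U(x)/q_U(x)$ is uniformly bounded by $c$ and all the functions involved are continuous; since you treat the case $\operatorname{supp}(\rho)\not\subseteq\operatorname{supp}(\sigma)$ separately (both sides $+\infty$), the argument closes, but this continuity check must be spelled out---it plays the role that lower semicontinuity plays in the paper's minimax route. (Also, ``finitely many up to the continuous orbit'' is a misnomer: the projective measurements form a continuous compact family, which is all Dini needs.) In exchange for these verifications, your approach is more elementary---no minimax theorem and no compact-convex finite-outcome reduction---while the paper's is shorter given the cited lemma; both ultimately rest on \cite[Theorem~4]{Berta2015OnEntropies}, yours through the projective equality $D_\alpha^M=D_\alpha^P$ and the paper's through the finite-outcome compactness statement.
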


\begin{proof}
See Appendix~\ref{app:alpha-1-limit}.
\end{proof}

\medskip

Let us note that the convergence statement in Proposition~\ref{prop:alpha-1-limit-measured} can be made more precise by
invoking~\eqref{eq:renyi-ordered}\ and \cite[Lemma~23]
{cheng2024invitationsamplecomplexityquantum} (see also \cite[Lemma~13]
{Bergh2024}). Specifically, when $\operatorname{supp}(\rho)\subseteq
\operatorname{supp}(\sigma)$, there exists a state-dependent constant
$c(\rho\Vert\sigma)$ such that, for all $\delta\in(0,\frac{\ln3}{2c(\rho
\Vert\sigma)}]$, the following bound holds:
\begin{equation}
D_{1-\delta}^{M}(\rho\Vert\sigma)\leq D^{M}(\rho\Vert\sigma)\leq D_{1-\delta
}^{M}(\rho\Vert\sigma)+\delta K\left[  c(\rho\Vert\sigma)\right]
^{2},\label{eq:bound-performance-renyi-approach}
\end{equation}
where $K\coloneqq\cosh(\left(  \ln3\right)  /2)$. We invoke this bound later
on in Remark~\ref{rem:alt-approach}.

The following statements can be proven similarly to
Propositions~\ref{prop:rank-one-opt}\ and~\ref{prop:DP-measured-renyi}, or
alternatively, they can be understood as the $\alpha\rightarrow1$ limit of
these propositions.

\begin{proposition}
It suffices to optimize $D^{M}(\rho\Vert\sigma)$ over rank-one POVMs; i.e.,
\begin{equation}
D^{M}(\rho\Vert\sigma)=\sup_{\mathcal{X},\left(  \varphi_{x}\right)
_{x\in\mathcal{X}}}\sum_{x\in\mathcal{X}}\operatorname{Tr}[\varphi_{x}\rho
]\ln\!\left(  \frac{\operatorname{Tr}[\varphi_{x}\rho]}{\operatorname{Tr}
[\varphi_{x}\sigma]}\right)  ,
\end{equation}
where each $\varphi_{x}$ is a rank-one operator such that $\sum_{x\in
\mathcal{X}}\varphi_{x}=I$.
\end{proposition}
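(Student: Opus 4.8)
The plan is to mirror the proof of Proposition~\ref{prop:rank-one-opt}, using the classical data-processing inequality for the (non-R\'{e}nyi) relative entropy rather than its R\'{e}nyi counterpart. Observe first that the inequality
\[
D^{M}(\rho\Vert\sigma)\geq\sup_{\mathcal{X},\left(  \varphi_{x}\right)
_{x\in\mathcal{X}}}\sum_{x\in\mathcal{X}}\operatorname{Tr}[\varphi_{x}\rho
]\ln\!\left(  \frac{\operatorname{Tr}[\varphi_{x}\rho]}{\operatorname{Tr}
[\varphi_{x}\sigma]}\right)
\]
is immediate, since every rank-one POVM is in particular a POVM, so the supremum defining $D^{M}(\rho\Vert\sigma)$ in~\eqref{eq:measured-rel-ent-def} is over a larger set. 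Hence the only content is the reverse inequality.

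For the reverse direction, fix an arbitrary POVM $\left(\Lambda_{x}\right)_{x\in\mathcal{X}}$. Diagonalize each $\Lambda_{x}=\sum_{z\in\mathcal{Z}}\phi_{x,z}$ into rank-one positive semi-definite pieces, so that $\left(\phi_{x,z}\right)_{x\in\mathcal{X},z\in\mathcal{Z}}$ is a rank-one POVM. Define the joint nonnegative functions $p_{X,Z}(x,z)\coloneqq\operatorname{Tr}[\phi_{x,z}\rho]$ and $q_{X,Z}(x,z)\coloneqq\operatorname{Tr}[\phi_{x,z}\sigma]$; note $p_{X,Z}$ is a genuine probability distribution because $\sum_{x,z}\phi_{x,z}=I$ and $\operatorname{Tr}[\rho]=1$. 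Marginalization over $z$ is a classical channel, and it sends $p_{X,Z}$ to $p_{X}$ with $p_{X}(x)=\operatorname{Tr}[\Lambda_{x}\rho]$ and likewise $q_{X,Z}$ to $q_{X}$ with $q_{X}(x)=\operatorname{Tr}[\Lambda_{x}\sigma]$. By the data-processing inequality for the classical relative entropy,
\[
\sum_{x,z}\operatorname{Tr}[\phi_{x,z}\rho]\ln\!\left(  \frac
{\operatorname{Tr}[\phi_{x,z}\rho]}{\operatorname{Tr}[\phi_{x,z}\sigma]}
\right)  =D(p_{X,Z}\Vert q_{X,Z})\geq D(p_{X}\Vert q_{X})=\sum_{x}
\operatorname{Tr}[\Lambda_{x}\rho]\ln\!\left(  \frac{\operatorname{Tr}
[\Lambda_{x}\rho]}{\operatorname{Tr}[\Lambda_{x}\sigma]}\right)  .
\]
The left-hand side is at most the supremum over rank-one POVMs, so the quantity associated with the arbitrary POVM $\left(\Lambda_{x}\right)_{x}$ is bounded above by that supremum; taking the supremum over all POVMs $\left(\Lambda_{x}\right)_{x}$ on the right gives $D^{M}(\rho\Vert\sigma)\leq\sup_{\mathcal{X},\left(\varphi_{x}\right)_{x}}\sum_{x}\operatorname{Tr}[\varphi_{x}\rho]\ln(\operatorname{Tr}[\varphi_{x}\rho]/\operatorname{Tr}[\varphi_{x}\sigma])$, as desired. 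Alternatively, as the paper notes, one can simply take the $\alpha\to1$ limit of Proposition~\ref{prop:rank-one-opt} using Proposition~\ref{prop:alpha-1-limit-measured}, though this requires also justifying that the rank-one-restricted R\'{e}nyi quantities converge to the rank-one-restricted relative entropy, which is most cleanly handled by the direct argument above.

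The only mild subtlety, and the thing I would be most careful about, is the handling of support conditions and the value $+\infty$: one must make sure the data-processing step is valid even when some $\operatorname{Tr}[\phi_{x,z}\sigma]=0$, and when $D^{M}(\rho\Vert\sigma)=+\infty$ (equivalently, when $\operatorname{supp}(\rho)\not\subseteq\operatorname{supp}(\sigma)$), in which case the supremum over rank-one POVMs must also be shown to equal $+\infty$. Both are standard: classical data processing for relative entropy holds with the usual conventions including $+\infty$ values, and if $\operatorname{supp}(\rho)\not\subseteq\operatorname{supp}(\sigma)$ one can exhibit a rank-one projective measurement — e.g.\ refining the projection onto a vector in $\operatorname{supp}(\rho)\cap\operatorname{supp}(\sigma)^{\perp}$ to a full rank-one POVM — that already yields $+\infty$. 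With those conventions in place, no further work is needed.
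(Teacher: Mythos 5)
Your proof is correct and follows exactly the route the paper indicates: it mirrors the argument of Proposition~\ref{prop:rank-one-opt}, replacing the classical R\'{e}nyi data-processing inequality with the one for the ordinary relative entropy (the paper itself only remarks that the statement "can be proven similarly" or obtained as the $\alpha\to1$ limit). Your added care about support conditions and the $+\infty$ case is a reasonable, correct elaboration and does not change the substance.
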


\begin{proposition}
The measured relative entropy obeys the data-processing inequality; i.e., for
every state $\rho$, positive semi-definite operator $\sigma$, and quantum
channel $\mathcal{N}$, the following inequality holds:
\begin{equation}
D^{M}(\rho\Vert\sigma)\geq D^{M}(\mathcal{N}(\rho)\Vert\mathcal{N}(\sigma)).
\end{equation}

\end{proposition}

\subsection{Variational formulas for the measured relative entropy of states}

For a state $\rho$ and a positive semi-definite operator $\sigma$, the
following variational formulas for the measured relative entropy are known:
\begin{align}
D^{M}(\rho\Vert\sigma)  &  =\sup_{\omega>0}\left\{  \operatorname{Tr}[\left(
\ln\omega\right)  \rho]-\ln\operatorname{Tr}[\omega\sigma]\right\} \\
&  =\sup_{\omega>0}\left\{  \operatorname{Tr}[\left(  \ln\omega\right)
\rho]-\operatorname{Tr}[\omega\sigma]+1\right\}  .
\label{eq:BFT-var-form-meas-rel-ent}
\end{align}
The first was established in~\cite[Eq.~(11.69)]{petz2007quantum} and
\cite[Lemma~1 and Theorem~2]{Berta2015OnEntropies}, while the second was
established in~\cite[Lemma~1 and Theorem~2]{Berta2015OnEntropies}. Due to the
fact that the function $\omega\mapsto\ln\omega$ is operator concave, it
follows that the function $\omega\mapsto\operatorname{Tr}[\left(  \ln
\omega\right)  \rho]-\operatorname{Tr}[\omega\sigma]+1$ is concave, which is a
notable feature of the variational representation in
\eqref{eq:BFT-var-form-meas-rel-ent} that has further implications discussed
in the next section.

\subsection{Optimizing the measured relative entropy of states}

In this section, we observe that there is an efficient algorithm for computing
the measured relative entropy, which makes use of the observation
in~\eqref{eq:log-perspective-identity-simple}\ and the fact recalled in
Section~\ref{sec:op-connection}.

\begin{proposition}
\label{prop:meas-rel-ent-states-var-rep-perspective}Let $\rho$ be a state and
$\sigma$ a positive semi-definite operator. Then
\begin{equation}
D^{M}(\rho\Vert\sigma)=\sup_{\omega>0, \theta\in\mathbb{H}}\left\{  \operatorname{Tr}
[\theta\rho]-\operatorname{Tr}[\omega\sigma]+1:\theta\leq P_{\ln}
(I,\omega)\right\}  , \label{eq:meas-SDP}
\end{equation}
where $P_{\ln}$ is defined in~\eqref{eq:log-perspect}.

Furthermore, when $\rho$ and $\sigma$ are $d\times d$ matrices, the quantity
$D^{M}(\rho\Vert\sigma)$ can be efficiently calculated by means of a
semi-definite program up to an additive error $\varepsilon$, by means of $O(\sqrt
{\ln(1/\varepsilon)})$ linear matrix inequalities, each of size $2d\times2d$. Specifically, there exist $m,k\in \mathbb{N}$ such that $m+k=O(\sqrt
{\ln(1/\varepsilon)})$ and the following inequality holds:
\begin{equation}
    \left | D^M(\rho\|\sigma)-D^M_{m,k}(\rho\|\sigma)\right|\leq \varepsilon,
    \label{eq:rmk-log-approx}
\end{equation}
where
\begin{equation}
D_{m,k}^{M}(\rho\|\sigma)\coloneqq\sup_{\substack{\omega>0,\theta\in\mathbb{H},\\
T_{1},\ldots,T_{m}\in\mathbb{H},\\
Z_{0},\ldots,Z_{k}\in\mathbb{H}
}
}\left\{ \begin{array}{c}
\Tr\!\left[\theta\rho\right]-\Tr\!\left[\omega\sigma\right]+1:\\
Z_{0}=\omega,\,\sum_{j=1}^{m}w_{j}T_{j}=2^{-k}\theta,\\
\left\{ \begin{bmatrix}Z_{i} & Z_{i+1}\\
Z_{i+1} & I
\end{bmatrix}\geq0\right\} _{i=0}^{k-1},\\
\left\{ \begin{bmatrix}Z_{k}-I-T_{j} & -\sqrt{t_{j}}T_{j}\\
-\sqrt{t_{j}}T_{j} & I-t_{j}T_{j}
\end{bmatrix}\geq0\right\} _{j=1}^{m}
\end{array}\right\} \label{eq:measured-rel-ent-approx-m-k}
\end{equation}
and, for all $j\in\left\{ 1,\ldots,m\right\} $, $w_{j}$ and $t_{j}$
are the weights and nodes, respectively, for the $m$-point Gauss--Legendre
quadrature on the interval $\left[0,1\right]$.
\end{proposition}

\begin{proof}
The formula in~\eqref{eq:meas-SDP} is a direct consequence of
\eqref{eq:BFT-var-form-meas-rel-ent} and the following identity:
\begin{equation}
\ln\omega=P_{\ln}(I,\omega), \label{eq:log-perspective-identity-simple}
\end{equation}
while noting that the optimal value of $\theta$ in~\eqref{eq:meas-SDP} is
equal to $P_{\ln}(I,\omega)$, the latter being a Hermitian operator.

As such, we have rewritten $D^{M}(\rho\Vert\sigma)$ in terms of the hypograph
of $P_{\ln}(I,\omega)$. By appealing to~\cite[Theorem~2 and Theorem~3]
{Fawzi2019}, it follows that $D^{M}(\rho\Vert\sigma)$ can be efficiently
calculated by means of a semi-definite program with the stated complexity. In more detail, following \cite[Eqs.~(9) and (11)]{Fawzi2019}, let us define the functions
$r_{m}(x)$ and $r_{m,k}(x)$ as follows:
\begin{align}
r_{m}(x) & \coloneqq\sum_{j=1}^{m}w_{j}\frac{x-1}{t_{j}\left(x-1\right)+1},\\
r_{m,k}(x) & \coloneqq2^{k}r_{m}(x^{1/2^{k}}),
\end{align}
where $w_{j}$ and $t_{j}$ are as stated above. We can then define
the following perspective function:
\begin{equation}
P_{r_{m,k}}(X,Y)\coloneqq X^{1/2}r_{m,k}(X^{-1/2}YX^{-1/2})X^{1/2},
\end{equation}
which is jointly concave in $X$ and $Y$ \cite[Proposition~3]{Fawzi2019},
and its hypograph cone as follows:
\begin{equation}
\hyp_{r_{m,k}}\coloneqq\left\{ \left(X,Y,T\right)\in\mathbb{P}_{>0}(\mathcal{H})\times\mathbb{P}_{>0}(\mathcal{H})\times\mathbb{H}(\mathcal{H}):P_{r_{m,k}}(X,Y)\geq T\right\} .
\label{eq:hypo-r-mk-def}
\end{equation}
Then \cite[Theorem~3]{Fawzi2019} states that $\left(X,Y,T\right)\in\hyp_{r_{m,k}}$
if and only if there exist $T_{1},\ldots,T_{m},Z_{0},\ldots,Z_{k}\in\mathbb{H}$
such that
\begin{align}
Z_{0} & =Y, \label{eq:SDP-rep-hypo-r-mk-1}\\
\sum_{j=1}^{m}w_{j}T_{j} & =2^{-k}T,\\
\begin{bmatrix}Z_{i} & Z_{i+1}\\
Z_{i+1} & X
\end{bmatrix} & \geq0\qquad\forall i\in\left\{ 0,\ldots,k-1\right\} ,\\
\begin{bmatrix}Z_{k}-X-T_{j} & -\sqrt{t_{j}}T_{j}\\
-\sqrt{t_{j}}T_{j} & X-t_{j}T_{j}
\end{bmatrix} & \geq0\qquad\forall j\in\left\{ 1,\ldots,m\right\} .
\label{eq:SDP-rep-hypo-r-mk-last}
\end{align}
As such, the quantity $D_{m,k}^{M}(\rho\|\sigma)$, as defined in
\eqref{eq:measured-rel-ent-approx-m-k}, is equal to
\begin{equation}
D_{m,k}^{M}(\rho\|\sigma)=\sup_{\omega>0,\theta\in\mathbb{H}}\left\{ \Tr\!\left[\theta\rho\right]-\Tr\!\left[\omega\sigma\right]+1:\theta\leq P_{r_{m,k}}(I,\omega)\right\} .
\end{equation}
To arrive at the claim in~\eqref{eq:rmk-log-approx}, we invoke Lemma~\ref{lem:ln-hypograph-approx},
which is a straightforward matrix generalization of \cite[Theorem~2]{Fawzi2019}.
Specifically, by invoking the first part of Lemma~\ref{lem:ln-hypograph-approx},
we conclude that
\begin{equation}
\theta\leq P_{\ln}(I,\omega)\quad\implies\quad\theta-\varepsilon I\leq P_{r_{m,k}}(I,\omega),
\end{equation}
so that
\begin{align}
D^{M}(\rho\|\sigma) & =\sup_{\omega>0,\theta\in\mathbb{H}}\left\{ \Tr\!\left[\theta\rho\right]-\Tr\!\left[\omega\sigma\right]+1:\theta\leq P_{\ln}(I,\omega)\right\} \\
 & \leq\sup_{\omega>0,\theta\in\mathbb{H}}\left\{ \Tr\!\left[\theta\rho\right]-\Tr\!\left[\omega\sigma\right]+1:\theta-\varepsilon I\leq P_{r_{m,k}}(I,\omega)\right\} \\
 & =\sup_{\omega>0,\theta'\in\mathbb{H}}\left\{ \Tr\!\left[\left(\theta'+\varepsilon I\right)\rho\right]-\Tr\!\left[\omega\sigma\right]+1:\theta'\leq P_{r_{m,k}}(I,\omega)\right\} \\
 & =\sup_{\omega>0,\theta'\in\mathbb{H}}\left\{ \Tr\!\left[\theta'\rho\right]-\Tr\!\left[\omega\sigma\right]+1:\theta'\leq P_{r_{m,k}}(I,\omega)\right\} +\varepsilon\\
 & =D_{m,k}^{M}(\rho\|\sigma)+\varepsilon.\label{eq:approx-rmk-ln-1}
\end{align}
The second equality follows from the substitution $\theta'=\theta-\varepsilon I$
and the penultimate equality follows because $\Tr\!\left[\left(\theta'+\varepsilon I\right)\rho\right]=\Tr\!\left[\theta'\rho\right]+\varepsilon$.
Now by invoking the second part of Lemma~\ref{lem:ln-hypograph-approx},
we conclude that
\begin{equation}
\theta\leq P_{r_{m,k}}(I,\omega)\quad\implies\quad\theta-\varepsilon I\leq P_{\ln}(I,\omega),
\end{equation}
so that
\begin{align}
D_{m,k}^{M}(\rho\|\sigma) & =\sup_{\omega>0,\theta\in\mathbb{H}}\left\{ \Tr\!\left[\theta\rho\right]-\Tr\!\left[\omega\sigma\right]+1:\theta\leq P_{r_{m,k}}(I,\omega)\right\} \\
 & \leq\sup_{\omega>0,\theta\in\mathbb{H}}\left\{ \Tr\!\left[\theta\rho\right]-\Tr\!\left[\omega\sigma\right]+1:\theta-\varepsilon I\leq P_{\ln}(I,\omega)\right\} \\
 & =\sup_{\omega>0,\theta'\in\mathbb{H}}\left\{ \Tr\!\left[\left(\theta'+\varepsilon I\right)\rho\right]-\Tr\!\left[\omega\sigma\right]+1:\theta'\leq P_{\ln}(I,\omega)\right\} \\
 & =\sup_{\omega>0,\theta'\in\mathbb{H}}\left\{ \Tr\!\left[\theta'\rho\right]-\Tr\!\left[\omega\sigma\right]+1:\theta'\leq P_{\ln}(I,\omega)\right\} +\varepsilon\\
 & =D^{M}(\rho\|\sigma)+\varepsilon.\label{eq:approx-rmk-ln-2}
\end{align}
Combining~\eqref{eq:approx-rmk-ln-1} and~\eqref{eq:approx-rmk-ln-2},
we conclude~\eqref{eq:rmk-log-approx}.
\end{proof}

\begin{lemma}
\label{lem:ln-hypograph-approx}Let $a>1$ and $\varepsilon>0$. Then
there exist $m,k\in\mathbb{N}$ with $m+k=O(\sqrt{\ln(1/\varepsilon)})$
such that
\begin{enumerate}
\item If $0\leq a^{-1}Y\leq X\leq aY$ and $\left(X,Y,T\right)\in\hyp_{\ln}$,
then $\left(X,Y,T-X\varepsilon\right)\in\hyp_{r_{m,k}}.$
\item If $0\leq a^{-1}Y\leq X\leq aY$ and $\left(X,Y,T\right)\in\hyp_{r_{m,k}}$,
then $\left(X,Y,T-X\varepsilon\right)\in\hyp_{\ln}.$
\end{enumerate}
\end{lemma}

\begin{proof}
The proof follows the approach given in the proof of
\cite[Theorem 6]{Fawzi2019}. Consider that
\begin{align}
0 & \leq a^{-1}Y\leq X\leq aY \label{eq:hyp-approx-step-0}\\
\implies\qquad0 & \leq a^{-1}I\leq Y^{-1/2}XY^{-1/2}\leq aI\label{eq:hyp-approx-step-1}\\
\implies\qquad0 & \leq a^{-1}I\leq Y^{1/2}X^{-1}Y^{1/2}\leq aI\label{eq:hyp-approx-step-2}\\
\implies\qquad0 & \leq a^{-1}I\leq X^{-1/2}YX^{-1/2}\leq aI,\label{eq:hyp-approx-step-3}
\end{align}
where~\eqref{eq:hyp-approx-step-1} follows by left and right multiplying~\eqref{eq:hyp-approx-step-0} by $Y^{-1/2}$,~\eqref{eq:hyp-approx-step-2} follows from operator
antimonotonicity of the function $x\mapsto x^{-1}$, specifically,
\begin{align}
a^{-1}I\leq Y^{-1/2}XY^{-1/2} & \implies Y^{1/2}X^{-1}Y^{1/2}\leq aI,\\
Y^{-1/2}XY^{-1/2} \leq aI & \implies a^{-1}I\leq Y^{1/2}X^{-1}Y^{1/2},
\end{align}
and~\eqref{eq:hyp-approx-step-3} follows because $Y^{1/2}X^{-1}Y^{1/2}$
and $X^{-1/2}YX^{-1/2}$ have the same spectrum. By  \cite[Theorem~1]{Fawzi2019},
there exist $m,k\in\mathbb{N}$ with $m+k=O(\sqrt{\ln(1/\varepsilon)})$
such that
\begin{equation}
\left|r_{m,k}(z)-\ln(z)\right|\leq\varepsilon,
\end{equation}
for all $z\in\left[a^{-1},a\right]$. Then applying this to~\eqref{eq:hyp-approx-step-3}
gives that
\begin{equation}
r_{m,k}(X^{-1/2}YX^{-1/2})-\ln(X^{-1/2}YX^{-1/2})\geq-\varepsilon I,
\end{equation}
which implies that
\begin{equation}
X^{1/2}r_{m,k}(X^{-1/2}YX^{-1/2})X^{1/2}-X^{1/2}\ln(X^{-1/2}YX^{-1/2})X^{1/2}\geq-\varepsilon X.\label{eq:hyp-approx-step-4}
\end{equation}
By assumption, $\left(X,Y,T\right)\in\hyp_{\ln}$, so that
\begin{equation}
X^{1/2}\ln(X^{-1/2}YX^{-1/2})X^{1/2}\geq T.\label{eq:hyp-approx-step-5}
\end{equation}
Now adding~\eqref{eq:hyp-approx-step-4} and~\eqref{eq:hyp-approx-step-5}
gives
\begin{equation}
X^{1/2}r_{m,k}(X^{-1/2}YX^{-1/2})X^{1/2}\geq T-\varepsilon X,
\end{equation}
which is equivalent to $P_{r_{m,k}}(X,Y)\geq T-\varepsilon X$, thus
establishing the first claim. The proof of the second claim is similar.
\end{proof}

\begin{remark}
\label{rem:alt-approach}An alternative approach for computing the measured
relative entropy is to set $\alpha=1-2^{-\ell}$ for $\ell\in\mathbb{N}$,
similar to what was done in \cite[Lemma~9]{fang2021geometric}. By appealing to
Proposition~\ref{prop:opt-meas-renyi-states} and
\eqref{eq:bound-performance-renyi-approach}, it follows that, in order to
achieve an additive error $\varepsilon$ in computing the measured relative entropy, the
semi-definite program resulting from this approach requires  $O(\ln
(1/\varepsilon))$ linear matrix inequalities, each of size $2d\times2d$. When
compared to the performance of the approach from
Proposition~\ref{prop:meas-rel-ent-states-var-rep-perspective}, it is clear
that this latter approach is preferred because it requires only $O(\sqrt
{\ln(1/\varepsilon)})$ linear matrix inequalities in order to achieve the same error. 
\end{remark}

\section{Measured R\'{e}nyi relative entropy of channels}

\label{sec:measured-renyi-channels}

\subsection{Definition and basic properties}

\begin{definition}[Measured R\'{e}nyi channel divergence]
\label{def:measured-renyi-channels}
Given a quantum channel $\mathcal{N}_{A\rightarrow B}$, a completely positive
map $\mathcal{M}_{A\rightarrow B}$, a Hamiltonian $H_{A}$ (Hermitian operator
acting on system $A$), and an energy constraint $E\in\mathbb{R}$, the
energy-constrained measured R\'{e}nyi relative entropy of channels is defined
for all $\alpha\in\left(  0,1\right)  \cup\left(  1,\infty\right)  $ as
\begin{multline}
D_{\alpha,H,E}^{M}(\mathcal{N}\Vert\mathcal{M}
)\coloneqq\label{eq:EC-meas-Renyi-channel}\\
\sup_{\substack{d_{R^{\prime}}\in\mathbb{N},\\\rho_{R^{\prime}A}\in
\mathbb{D}(\mathcal{H}_{R^{\prime}A}\mathcal{)}}}\left\{  D_{\alpha}
^{M}(\mathcal{N}_{A\rightarrow B}(\rho_{R^{\prime}A})\Vert\mathcal{M}
_{A\rightarrow B}(\rho_{R^{\prime}A})):\operatorname{Tr}[H_{A}\rho_{A}]\leq
E\right\}  .
\end{multline}
In what follows, for brevity, we also refer to the quantity in
\eqref{eq:EC-meas-Renyi-channel} as the measured R\'{e}nyi channel divergence.
\end{definition}

In~\eqref{eq:EC-meas-Renyi-channel}\ above, the supremum is taken not only
over every bipartite state $\rho_{R^{\prime}A}$ but also over the reference
system $R^{\prime}$ with dimension $d_{R^{\prime}}$. We define the measured
R\'{e}nyi channel divergence in this general way in order to allow for all
physically feasible ways of processing a channel; indeed, one prepares a state
$\rho_{R^{\prime}A}$, sends it through either $\mathcal{N}_{A\rightarrow B}$
or $\mathcal{M}_{A\rightarrow B}$, and processes the systems $R^{\prime}B$
with a measurement in order to distinguish the maps $\mathcal{N}_{A\rightarrow
B}$ or $\mathcal{M}_{A\rightarrow B}$. We impose an energy constraint only on
the input system $A$, because this is the simplest and most minimal
modification of an unconstrained channel divergence and it allows for all
physically plausible, yet unconstrained, reference systems. Imposing the
energy constraint in this way furthermore has the benefit of leading to an
efficient algorithm for computing $D_{\alpha,H,E}^{M}$ by semi-definite
programming (see Proposition~\ref{prop:var-rep-measured-renyi-channels}). Let
us note that imposing an energy constraint only on the input system $A$ is
similar to the approach taken when defining the Shirokov--Winter
energy-constrained diamond norm~\cite{Shirokov2018,Winter2017} or more general
energy-constrained channel divergences~\cite[Eq.~(12.12)]{SWAT18}. One obtains
the unconstrained measured R\'{e}nyi relative entropy of channels by setting
$H_{A}=I_{A}$ and $E=1$, so that the \textquotedblleft energy
constraint\textquotedblright\ becomes redundant with the constraint that
$\rho_{R^{\prime}A}$ is a state.

By appealing to~\eqref{eq:quasi-renyi}, we can also write
\begin{equation}
D_{\alpha,H,E}^{M}(\mathcal{N}\Vert\mathcal{M})=\frac{1}{\alpha-1}\ln
Q_{\alpha,H,E}^{M}(\mathcal{N}\Vert\mathcal{M}),
\end{equation}
where
\begin{multline}
Q_{\alpha,H,E}^{M}(\mathcal{N}\Vert\mathcal{M})\coloneqq\\
\left\{
\begin{array}
[c]{cc}
\inf_{\substack{d_{R^{\prime}}\in\mathbb{N},\\\rho_{R^{\prime}A}\in
\mathbb{D}(\mathcal{H}_{R^{\prime}A}\mathcal{)},\\\operatorname{Tr}[H_{A}
\rho_{A}]\leq E}}Q_{\alpha}^{M}(\mathcal{N}_{A\rightarrow B}(\rho_{R^{\prime
}A})\Vert\mathcal{M}_{A\rightarrow B}(\rho_{R^{\prime}A})) & \text{for }
\alpha\in\left(  0,1\right) \\
\sup_{\substack{d_{R^{\prime}}\in\mathbb{N},\\\rho_{R^{\prime}A}\in
\mathbb{D}(\mathcal{H}_{R^{\prime}A}\mathcal{)},\\\operatorname{Tr}[H_{A}
\rho_{A}]\leq E}}Q_{\alpha}^{M}(\mathcal{N}_{A\rightarrow B}(\rho_{R^{\prime
}A})\Vert\mathcal{M}_{A\rightarrow B}(\rho_{R^{\prime}A})) & \text{for }
\alpha>1
\end{array}
\right.  .
\end{multline}

Although the optimization in~\eqref{eq:EC-meas-Renyi-channel} is defined to be
over an unbounded space, it is possible to simplify the task by employing
basic quantum information-theoretic reasoning. Indeed, as stated in
Proposition~\ref{prop:simplify-measured-channels} below, one can write
$D_{\alpha,H,E}^{M}(\mathcal{N}\Vert\mathcal{M})$ and $Q_{\alpha,H,E}
^{M}(\mathcal{N}\Vert\mathcal{M})$\ in terms of the Choi operators of
$\mathcal{N}_{A\rightarrow B}$ and $\mathcal{M}_{A\rightarrow B}$, defined as
\begin{equation}
\Gamma_{RB}^{\mathcal{N}}\coloneqq\mathcal{N}_{A\rightarrow B}(\Gamma
_{RA}),\qquad\Gamma_{RB}^{\mathcal{M}}\coloneqq\mathcal{M}_{A\rightarrow
B}(\Gamma_{RA}),
\end{equation}
where the maximally entangled operator $\Gamma_{RA}$ is defined as
\begin{equation}
\Gamma_{RA}\coloneqq\sum_{i,j}|i\rangle\!\langle j|_{R}\otimes|i\rangle
\!\langle j|_{A},
\end{equation}
so that the reference system $R$ is isomorphic to the channel input system $A$
(i.e., the corresponding Hilbert spaces $\mathcal{H}_{R}$ and $\mathcal{H}
_{A}$ are isomorphic, denoted by $\mathcal{H}_{R}\simeq\mathcal{H}_{A}$):

\begin{proposition}
\label{prop:simplify-measured-channels}Given a quantum channel $\mathcal{N}
_{A\rightarrow B}$, a completely positive map $\mathcal{M}_{A\rightarrow B}$,
a Hamiltonian $H_{A}$, and an energy constraint $E\in\mathbb{R}$, the measured
R\'{e}nyi channel divergence can be written as follows for all $\alpha
\in\left(  0,1\right)  \cup\left(  1,\infty\right)  $:
\begin{equation}
D_{\alpha,H,E}^{M}(\mathcal{N}\Vert\mathcal{M})=\sup_{\substack{\rho_{R}
\in\mathbb{D}(\mathcal{H}_{R}),\\\operatorname{Tr}[H_{A}\rho_{A}]\leq
E}}\left\{  D_{\alpha}^{M}(\rho_{R}^{1/2}\Gamma_{RB}^{\mathcal{N}}\rho
_{R}^{1/2}\Vert\rho_{R}^{1/2}\Gamma_{RB}^{\mathcal{M}}\rho_{R}^{1/2})\right\}
,
\end{equation}
where $\mathcal{H}_{R}\simeq\mathcal{H}_{A}$ and $\rho_{R}=\rho_{A}$.
Equivalently,
\begin{multline}
Q_{\alpha,H,E}^{M}(\mathcal{N}\Vert\mathcal{M})=\\
\left\{
\begin{array}
[c]{cc}
\inf_{\substack{\rho_{R}\in\mathbb{D}(\mathcal{H}_{R}\mathcal{)}
,\\\operatorname{Tr}[H_{A}\rho_{A}]\leq E}}Q_{\alpha}^{M}(\rho_{R}^{1/2}
\Gamma_{RB}^{\mathcal{N}}\rho_{R}^{1/2}\Vert\rho_{R}^{1/2}\Gamma
_{RB}^{\mathcal{M}}\rho_{R}^{1/2}) & \text{for }\alpha\in\left(  0,1\right) \\
\sup_{\substack{\rho_{R}\in\mathbb{D}(\mathcal{H}_{R}\mathcal{)}
,\\\operatorname{Tr}[H_{A}\rho_{A}]\leq E}}Q_{\alpha}^{M}(\rho_{R}^{1/2}
\Gamma_{RB}^{\mathcal{N}}\rho_{R}^{1/2}\Vert\rho_{R}^{1/2}\Gamma
_{RB}^{\mathcal{M}}\rho_{R}^{1/2}) & \text{for }\alpha>1
\end{array}
\right.
\end{multline}

\end{proposition}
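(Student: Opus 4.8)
The plan is to prove the asserted formula for $D_{\alpha,H,E}^{M}(\mathcal{N}\Vert\mathcal{M})$ by establishing both inequalities separately, and then to deduce the formula for $Q_{\alpha,H,E}^{M}(\mathcal{N}\Vert\mathcal{M})$ at no extra cost: since $D_{\alpha}^{M}(\cdot\Vert\cdot)=\tfrac{1}{\alpha-1}\ln Q_{\alpha}^{M}(\cdot\Vert\cdot)$ and $\ln$ is strictly increasing, a supremum of $D_{\alpha}^{M}$ over a family of pairs corresponds to a supremum of $Q_{\alpha}^{M}$ when $\alpha>1$ (as $\tfrac{1}{\alpha-1}>0$) and to an infimum of $Q_{\alpha}^{M}$ when $\alpha\in(0,1)$ (as $\tfrac{1}{\alpha-1}<0$), which is exactly the case split in the statement.

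For the $\geq$ direction, I would fix a state $\rho_{R}\in\mathbb{D}(\mathcal{H}_{R})$ with $\mathcal{H}_{R}\simeq\mathcal{H}_{A}$ satisfying the energy constraint and form the bipartite operator $\phi_{RA}\coloneqq(\rho_{R}^{1/2}\otimes I_{A})\,\Gamma_{RA}\,(\rho_{R}^{1/2}\otimes I_{A})$. One checks that $\phi_{RA}$ is a density operator, since it is positive semi-definite and $\operatorname{Tr}[\phi_{RA}]=\operatorname{Tr}[\rho_{R}\Gamma_{RA}]=1$, that its reduced state on $A$ equals $\rho_{R}$ up to transposition in the Schmidt basis defining $\Gamma_{RA}$ (so the constraint $\operatorname{Tr}[H_{A}\rho_{A}]\leq E$ holds), and that the reference dimension $d_{A}$ is an allowed choice of $d_{R'}$. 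Because $\rho_{R}^{1/2}$ acts on $R$, which $\mathcal{N}_{A\rightarrow B}$ and $\mathcal{M}_{A\rightarrow B}$ leave untouched, these maps commute with conjugation by $\rho_{R}^{1/2}\otimes I$, so $\mathcal{N}_{A\rightarrow B}(\phi_{RA})=\rho_{R}^{1/2}\Gamma_{RB}^{\mathcal{N}}\rho_{R}^{1/2}$ and $\mathcal{M}_{A\rightarrow B}(\phi_{RA})=\rho_{R}^{1/2}\Gamma_{RB}^{\mathcal{M}}\rho_{R}^{1/2}$ (the latter positive semi-definite since $\mathcal{M}$ is completely positive). Thus $D_{\alpha}^{M}(\rho_{R}^{1/2}\Gamma_{RB}^{\mathcal{N}}\rho_{R}^{1/2}\Vert\rho_{R}^{1/2}\Gamma_{RB}^{\mathcal{M}}\rho_{R}^{1/2})$ is achieved by the feasible input $\phi_{RA}$, hence is bounded above by $D_{\alpha,H,E}^{M}(\mathcal{N}\Vert\mathcal{M})$; taking the supremum over $\rho_{R}$ yields the inequality.

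For the $\leq$ direction, I would start from an arbitrary $d_{R'}$ and state $\rho_{R'A}$ with $\operatorname{Tr}[H_{A}\rho_{A}]\leq E$, and set $\rho_{R}$ to be $\rho_{A}$ (transported, up to transposition in the Schmidt basis, to an operator on $\mathcal{H}_{R}\simeq\mathcal{H}_{A}$), so that $\phi_{RA}\coloneqq(\rho_{R}^{1/2}\otimes I_{A})\Gamma_{RA}(\rho_{R}^{1/2}\otimes I_{A})$ is a purification of $\rho_{A}$ on the reference $R$. Since $\rho_{R'A}$ is an extension of this same reduced state, the standard fact that every extension of a state is the image of any of its purifications under a channel on the purifying system yields a channel $\mathcal{F}_{R\rightarrow R'}$ with $\mathcal{F}_{R\rightarrow R'}(\phi_{RA})=\rho_{R'A}$; this is obtained by purifying $\rho_{R'A}$, relating it to $\phi_{RA}$ through the purification-uniqueness isometry, and tracing out the auxiliary system. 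As $\mathcal{F}_{R\rightarrow R'}$ acts on a system disjoint from the one acted on by $\mathcal{N}_{A\rightarrow B}$ and $\mathcal{M}_{A\rightarrow B}$, it commutes with them, so $\mathcal{N}_{A\rightarrow B}(\rho_{R'A})=\mathcal{F}_{R\rightarrow R'}(\rho_{R}^{1/2}\Gamma_{RB}^{\mathcal{N}}\rho_{R}^{1/2})$ and $\mathcal{M}_{A\rightarrow B}(\rho_{R'A})=\mathcal{F}_{R\rightarrow R'}(\rho_{R}^{1/2}\Gamma_{RB}^{\mathcal{M}}\rho_{R}^{1/2})$. Invoking the data-processing inequality of Proposition~\ref{prop:DP-measured-renyi} for the channel $\mathcal{F}_{R\rightarrow R'}$ then gives $D_{\alpha}^{M}(\mathcal{N}_{A\rightarrow B}(\rho_{R'A})\Vert\mathcal{M}_{A\rightarrow B}(\rho_{R'A}))\leq D_{\alpha}^{M}(\rho_{R}^{1/2}\Gamma_{RB}^{\mathcal{N}}\rho_{R}^{1/2}\Vert\rho_{R}^{1/2}\Gamma_{RB}^{\mathcal{M}}\rho_{R}^{1/2})$, with $\rho_{R}$ obeying the energy constraint; taking the supremum over $d_{R'}$ and $\rho_{R'A}$ on the left and over $\rho_{R}$ on the right completes the argument.

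The hard part will be the bookkeeping around the Choi/vectorization correspondence rather than any deep inequality: one must confirm that $(\rho_{R}^{1/2}\otimes I_{A})\Gamma_{RA}(\rho_{R}^{1/2}\otimes I_{A})$ is a density operator with reduced state on $A$ equal to $\rho_{R}$ (transposed in the Schmidt basis) and that it is a purification of that reduced state, and---the crucial point---that an arbitrary $\rho_{R'A}$ with the prescribed $A$-marginal is the image of $\phi_{RA}$ under a channel acting only on $R$. This last point is where the ``every extension comes from a purification'' lemma does the real work: it simultaneously handles the reduction of the unbounded reference dimension in the definition to $d_{A}$ and the passage from a general input $\rho_{R'A}$ to the special Choi-type form, and it is the step that should be stated and justified with care.
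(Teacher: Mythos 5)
Your argument is correct and is essentially the standard proof that the paper outsources to \cite[Lemma~6]{Cooney2016} and \cite[Proposition~7.82]{khatri2024principles}: the canonical purification gives the ``$\geq$'' direction, and the fact that every extension of $\rho_{A}$ arises from its canonical purification via a channel on the reference system, combined with data processing for $D_{\alpha}^{M}$, gives the ``$\leq$'' direction, with the energy constraint preserved because it depends only on the $A$-marginal. Your handling of the transpose in the Schmidt basis and of the passage from $D_{\alpha}^{M}$ to $Q_{\alpha}^{M}$ (sup versus inf according to the sign of $\tfrac{1}{\alpha-1}$) matches the intended argument.
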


\begin{proof}
See~\cite[Lemma~6]{Cooney2016} or~\cite[Proposition~7.82]
{khatri2024principles} for a detailed proof. The only difference with the
optimization above and those from ~\cite[Lemma~6]{Cooney2016}
and~\cite[Proposition~7.82]{khatri2024principles} is the additional energy
constraint $\operatorname{Tr}[H_{A}\rho_{A}]\leq E$, which follows because
$\rho_{R}^{1/2} \Gamma_{RA} \rho_{R}^{1/2}$, with $\rho_{R} = \rho_{A}$, is
the canonical purification of the state $\rho_{A}$.
\end{proof}

\subsection{Optimizing the measured R\'{e}nyi relative entropy of channels}

In this section, we observe in
Proposition~\ref{prop:var-rep-measured-renyi-channels}\ that there is a
variational representation of the measured R\'{e}nyi channel divergence in
terms of a linear objective function and the hypograph or epigraph of the
weighted geometric mean. From this observation, we conclude that there is an
efficient semi-definite optimization algorithm for computing the measured
R\'{e}nyi channel divergence, which makes use of the fact recalled in
Section~\ref{sec:hypo-epi}. Not only does this algorithm compute the optimal
value of $Q_{\alpha,H,E}^{M}(\mathcal{N}\Vert\mathcal{M})$ for all $\alpha
\in\left(  0,1\right)  \cup\left(  1,\infty\right)  $, but it also determines
an optimal input state and measurement that achieves $Q_{\alpha,H,E}
^{M}(\mathcal{N}\Vert\mathcal{M})$, which is of significant value for
applications. The proof of
Proposition~\ref{prop:var-rep-measured-renyi-channels} results from a
straightforward combination of Proposition~\ref{prop:opt-meas-renyi-states},
Proposition~\ref{prop:simplify-measured-channels}, and the transformer
equality in~\eqref{eq:KA-transformer-eq}.

\begin{proposition}
\label{prop:var-rep-measured-renyi-channels}Given is a quantum channel
$\mathcal{N}_{A\rightarrow B}$, a completely positive map $\mathcal{M}
_{A\rightarrow B}$, a Hamiltonian $H_{A}$, and an energy constraint
$E\in\mathbb{R}$. Let $\Gamma^{\mathcal{N}}$ and $\Gamma^{\mathcal{M}}$ be the
Choi operators of $\mathcal{N}_{A\rightarrow B}$ and $\mathcal{M}
_{A\rightarrow B}$, respectively. For $\alpha\in\left(  0,1/2\right)  $,
\begin{equation}
Q_{\alpha,H,E}^{M}(\mathcal{N}\Vert\mathcal{M})=\inf_{\Omega,\Theta,\rho
>0}\left\{
\begin{array}
[c]{c}
\alpha\operatorname{Tr}[\Omega\Gamma^{\mathcal{N}}]+\left(  1-\alpha\right)
\operatorname{Tr}[\Theta\Gamma^{\mathcal{M}}]:\\
\operatorname{Tr}[\rho]=1,\quad\operatorname{Tr}[H\rho]\leq E,\\
\Theta\geq G_{\frac{\alpha}{\alpha-1}}(\rho\otimes I,\Omega)
\end{array}
\right\}  ,
\end{equation}
for $\alpha\in\lbrack1/2,1)$,
\begin{equation}
Q_{\alpha,H,E}^{M}(\mathcal{N}\Vert\mathcal{M})=\inf_{\Omega,\Theta,\rho
>0}\left\{
\begin{array}
[c]{c}
\alpha\operatorname{Tr}[\Theta\Gamma^{\mathcal{N}}]+\left(  1-\alpha\right)
\operatorname{Tr}[\Omega\Gamma^{\mathcal{M}}]:\\
\operatorname{Tr}[\rho]=1,\quad\operatorname{Tr}[H\rho]\leq E,\\
\Theta\geq G_{1-\frac{1}{\alpha}}(\rho\otimes I,\Omega)
\end{array}
\right\}  ,\label{eq:var-exp-ch-a-1/2-1}
\end{equation}
and for $\alpha>1$,
\begin{equation}
Q_{\alpha,H,E}^{M}(\mathcal{N}\Vert\mathcal{M})=\sup_{\Omega,\Theta,\rho
>0}\left\{
\begin{array}
[c]{c}
\alpha\operatorname{Tr}[\Theta\Gamma^{\mathcal{N}}]+\left(  1-\alpha\right)
\operatorname{Tr}[\Omega\Gamma^{\mathcal{M}}]:\\
\operatorname{Tr}[\rho]=1,\quad\operatorname{Tr}[H\rho]\leq E,\\
\Theta\leq G_{1-\frac{1}{\alpha}}(\rho\otimes I,\Omega)
\end{array}
\right\}  .\label{eq:var-exp-ch-a-gt-1}
\end{equation}

For rational $\alpha\in\left(  0,1\right)  \cup\left(  1,\infty\right)  $, the
quantity $Q_{\alpha,H,E}^{M}(\mathcal{N}\Vert\mathcal{M})$ can be calculated
by means of a semi-definite program. More specifically, when $\Gamma
^{\mathcal{N}}$ and $\Gamma^{\mathcal{M}}$ are $d_{A}d_{B}\times d_{A}d_{B}$
matrices and $p$ and $q$ are relatively prime integers such that $\frac{p}
{q}=\frac{\alpha}{\alpha-1}$ for $\alpha\in\left(  0,1/2\right)  $ or
$\frac{p}{q}=1-\frac{1}{\alpha}$ for $\alpha\in\left[  1/2,1\right)
\cup(1,\infty)$, the semi-definite program requires $O(\log_{2}q)$ linear
matrix inequalities each of size $2d_{A}d_{B}\times2d_{A}d_{B}$.
\end{proposition}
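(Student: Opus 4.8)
The plan is to combine three ingredients already in hand: the state-level variational formulas of Proposition~\ref{prop:opt-meas-renyi-states}, the Choi-operator reduction of Proposition~\ref{prop:simplify-measured-channels}, and the transformer equality~\eqref{eq:KA-transformer-eq} for invertible $K$. First I would fix $\alpha$ in one of the three regimes — say $\alpha\in(0,1/2)$ — and start from Proposition~\ref{prop:simplify-measured-channels}, which expresses $Q_{\alpha,H,E}^{M}(\mathcal{N}\Vert\mathcal{M})$ as an infimum over states $\rho_{R}$ (with the energy constraint) of $Q_{\alpha}^{M}(\rho_{R}^{1/2}\Gamma_{RB}^{\mathcal{N}}\rho_{R}^{1/2}\Vert\rho_{R}^{1/2}\Gamma_{RB}^{\mathcal{M}}\rho_{R}^{1/2})$. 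For each fixed $\rho_{R}>0$, I would apply~\eqref{eq:meas-renyi-states-a-0-half} with $\rho\mapsto\rho_{R}^{1/2}\Gamma^{\mathcal{N}}\rho_{R}^{1/2}$ and $\sigma\mapsto\rho_{R}^{1/2}\Gamma^{\mathcal{M}}\rho_{R}^{1/2}$, obtaining an inner infimum over $\omega,\theta>0$ of $\alpha\operatorname{Tr}[\omega\,\rho_{R}^{1/2}\Gamma^{\mathcal{N}}\rho_{R}^{1/2}]+(1-\alpha)\operatorname{Tr}[\theta\,\rho_{R}^{1/2}\Gamma^{\mathcal{M}}\rho_{R}^{1/2}]$ subject to $\theta\geq G_{\frac{\alpha}{\alpha-1}}(I,\omega)$.

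The next step is the change of variables that does all the work. Set $\Omega\coloneqq\rho_{R}^{1/2}\,\omega\,\rho_{R}^{1/2}$ and $\Theta\coloneqq\rho_{R}^{1/2}\,\theta\,\rho_{R}^{1/2}$; since $\rho_{R}>0$ this is a bijection on positive definite operators, and by cyclicity $\operatorname{Tr}[\omega\,\rho_{R}^{1/2}\Gamma^{\mathcal{N}}\rho_{R}^{1/2}]=\operatorname{Tr}[\Omega\Gamma^{\mathcal{N}}]$ and likewise for the $\Theta$ term (identifying $\rho_{R}=\rho_{A}=\rho$, so the energy constraint becomes $\operatorname{Tr}[H\rho]\leq E$). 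The constraint $\theta\geq G_{\frac{\alpha}{\alpha-1}}(I,\omega)$ is conjugated by $K=\rho_{R}^{1/2}$, which is invertible, so by~\eqref{eq:KA-transformer-eq}, $K\,G_{\frac{\alpha}{\alpha-1}}(I,\omega)\,K^{\dagger}=G_{\frac{\alpha}{\alpha-1}}(K K^{\dagger}, K\omega K^{\dagger})=G_{\frac{\alpha}{\alpha-1}}(\rho_{R},\Omega)$, and the operator inequality $\theta\geq G_{\frac{\alpha}{\alpha-1}}(I,\omega)$ is preserved under conjugation by $K$ in both directions, yielding exactly $\Theta\geq G_{\frac{\alpha}{\alpha-1}}(\rho,\Omega)$. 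Here one must be careful: the reference system $R$ and the channel output $B$ are distinct tensor factors, $\Gamma^{\mathcal{N}}$ and $\Gamma^{\mathcal{M}}$ act on $RB$, and $\rho=\rho_{R}$ acts only on $R$; so when the inner problem lives on $RB$, the scaling is by $\rho_{R}\otimes I_{B}$, which is why the constraint in the statement reads $G_{\frac{\alpha}{\alpha-1}}(\rho\otimes I,\Omega)$ rather than $G_{\frac{\alpha}{\alpha-1}}(\rho,\Omega)$. Collapsing the nested $\inf_{\rho}\inf_{\Omega,\Theta}$ into a single $\inf_{\Omega,\Theta,\rho>0}$ then gives~\eqref{eq:meas-renyi-states-a-0-half}'s channel analogue. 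The cases $\alpha\in[1/2,1)$ and $\alpha>1$ are handled identically, starting from~\eqref{eq:meas-renyi-states-a-half-1} and~\eqref{eq:meas-renyi-states-a-gt-1} respectively, with the sup replacing the inf and the epigraph replacing the hypograph in the last case.

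For the semi-definite-programmability claim, I would simply invoke Section~\ref{sec:hypo-epi}: for rational $t=p/q$ in the relevant range, the sets $\mathrm{hyp}_{t}$ and $\mathrm{epi}_{t}$ are semi-definite representable by~\cite[Theorem~3]{Fawzi2017} using $O(\log_{2}q)$ linear matrix inequalities, here of size $2d_{A}d_{B}\times 2d_{A}d_{B}$ since $\Omega,\Theta$ and $\Gamma^{\mathcal{N}},\Gamma^{\mathcal{M}}$ are $d_{A}d_{B}\times d_{A}d_{B}$ operators on $RB$; the extra variable $\rho$ (of size $d_{A}\times d_{A}$) and the linear constraints $\operatorname{Tr}[\rho]=1$, $\operatorname{Tr}[H\rho]\leq E$ only add a constant number of further linear matrix inequalities, and the objective is linear in $(\Omega,\Theta,\rho)$, so the whole thing is a genuine SDP of the stated size. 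The main obstacle I anticipate is a bookkeeping one rather than a conceptual one: getting the tensor-factor structure exactly right so that the $I$ in $G_{t}(\rho\otimes I,\Omega)$ sits on $B$ and not on $R$, and double-checking that the supremum over reference dimension $d_{R'}$ in the original definition has genuinely been eliminated by Proposition~\ref{prop:simplify-measured-channels} before the change of variables is applied — i.e., that no hidden optimization over $R'$ survives. A secondary subtlety is confirming that the optimal $\theta$ (resp.\ $\Theta$) is attained so that the constraint can be taken with equality, which follows because the map $(X,Y)\mapsto G_{t}(X,Y)$ is itself the extremal feasible choice, exactly as argued in the proof of Proposition~\ref{prop:opt-meas-renyi-states}.
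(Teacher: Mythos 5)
Your proposal is correct and follows essentially the same route as the paper: reduce via Proposition~\ref{prop:simplify-measured-channels}, apply the state-level formulas of Proposition~\ref{prop:opt-meas-renyi-states} to $\rho_{R}^{1/2}\Gamma_{RB}^{\mathcal{N}}\rho_{R}^{1/2}$ and $\rho_{R}^{1/2}\Gamma_{RB}^{\mathcal{M}}\rho_{R}^{1/2}$, change variables $\Omega=\rho_{R}^{1/2}\Omega'\rho_{R}^{1/2}$, $\Theta=\rho_{R}^{1/2}\Theta'\rho_{R}^{1/2}$ with the transformer equality~\eqref{eq:KA-transformer-eq} for $K=\rho^{1/2}\otimes I$, and invoke \cite[Theorem~3]{Fawzi2017} for the SDP claim. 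The only point you glide past is that Proposition~\ref{prop:simplify-measured-channels} optimizes over all density operators $\rho_{R}$, including rank-deficient ones, whereas the final formula restricts to $\rho>0$; the paper closes this with a one-line continuity argument (positive definite density operators are dense and the objective is continuous in $\rho_{R}$), which is worth stating explicitly.
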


\begin{proof}
This follows by combining Proposition~\ref{prop:opt-meas-renyi-states}\ and
Proposition~\ref{prop:simplify-measured-channels} and employing the
transformer equality in~\eqref{eq:KA-transformer-eq}. Let us consider the case
$\alpha\in\left(  0,1/2\right)  $:
\begin{align}
&  Q_{\alpha,H,E}^{M}(\mathcal{N}\Vert\mathcal{M})\nonumber\\
&  =\inf_{\substack{\rho_{R}\in\mathbb{D}(\mathcal{H}_{R}\mathcal{)}
,\\\operatorname{Tr}[H_{A}\rho_{A}]\leq E}}Q_{\alpha}^{M}(\rho_{R}^{1/2}
\Gamma_{RB}^{\mathcal{N}}\rho_{R}^{1/2}\Vert\rho_{R}^{1/2}\Gamma
_{RB}^{\mathcal{M}}\rho_{R}^{1/2})\label{eq:proof-channel-renyi-1}\\
&  =\inf_{\substack{\Omega^{\prime},\Theta^{\prime}>0,\\\rho_{R}\in
\mathbb{D}(\mathcal{H}_{R}\mathcal{)},\\\operatorname{Tr}[H_{A}\rho_{A}]\leq
E}}\left\{
\begin{array}
[c]{c}
\alpha\operatorname{Tr}[\Omega^{\prime}\rho_{R}^{1/2}\Gamma_{RB}^{\mathcal{N}
}\rho_{R}^{1/2}]+\left(  1-\alpha\right)  \operatorname{Tr}[\Theta^{\prime
}\rho_{R}^{1/2}\Gamma_{RB}^{\mathcal{M}}\rho_{R}^{1/2}]:\\
\Theta^{\prime}\geq G_{\frac{\alpha}{\alpha-1}}(I,\Omega^{\prime})
\end{array}
\right\}  \\
&  =\inf_{\substack{\Omega^{\prime},\Theta^{\prime},\rho_{R}>0,\\\operatorname{Tr}[\rho_{R}]=1\\\operatorname{Tr}[H_{A}\rho_{A}]\leq
E}}\left\{
\begin{array}
[c]{c}
\alpha\operatorname{Tr}[\rho_{R}^{1/2}\Omega^{\prime}\rho_{R}^{1/2}\Gamma
_{RB}^{\mathcal{N}}]+\left(  1-\alpha\right)  \operatorname{Tr}[\rho_{R}
^{1/2}\Theta^{\prime}\rho_{R}^{1/2}\Gamma_{RB}^{\mathcal{M}}]:\\
\Theta^{\prime}\geq G_{\frac{\alpha}{\alpha-1}}(I,\Omega^{\prime})
\end{array}
\right\}  \\
&  =\inf_{\substack{\Omega,\Theta,\rho_{R}>0,\\ \operatorname{Tr}[\rho
_{R}]=1\\\operatorname{Tr}[H_{A}\rho_{A}]\leq E}}\left\{
\begin{array}
[c]{c}
\alpha\operatorname{Tr}[\Omega\Gamma_{RB}^{\mathcal{N}}]+\left(
1-\alpha\right)  \operatorname{Tr}[\Theta\Gamma_{RB}^{\mathcal{M}}]:\\
\Theta\geq G_{\frac{\alpha}{\alpha-1}}(\rho\otimes I,\Omega)
\end{array}
\right\}  .\label{eq:proof-channel-renyi-last}
\end{align}
The first equality follows from
Proposition~\ref{prop:simplify-measured-channels}. The second equality follows
from Proposition~\ref{prop:opt-meas-renyi-states}. The third equality follows
from cyclicity of trace and the fact that the function
\begin{equation}
\rho_{R}\mapsto\alpha\operatorname{Tr}[\rho_{R}^{1/2}\Omega^{\prime}\rho
_{R}^{1/2}\Gamma_{RB}^{\mathcal{N}}]+\left(  1-\alpha\right)
\operatorname{Tr}[\rho_{R}^{1/2}\Theta^{\prime}\rho_{R}^{1/2}\Gamma
_{RB}^{\mathcal{M}}]
\end{equation}
is continuous in $\rho_{R}$, so that the optimization can be performed over
the set of positive definite density operators (dense in the set of all
density operators). The final equality follows from defining
\begin{equation}
\Omega\coloneqq\rho_{R}^{1/2}\Omega^{\prime}\rho_{R}^{1/2},\qquad
\Theta\coloneqq\rho_{R}^{1/2}\Theta^{\prime}\rho_{R}^{1/2}
,\label{eq:subst-defs}
\end{equation}
and noting that
\begin{equation}
\Omega^{\prime},\Theta^{\prime}>0\quad\Leftrightarrow\quad\Omega,\Theta>0,
\end{equation}
as well as
\begin{align}
&  \quad\Theta^{\prime}\geq G_{\frac{\alpha}{\alpha-1}}(I,\Omega^{\prime
})\nonumber\\
&  \Leftrightarrow\quad\left(  \rho^{1/2}\otimes I\right)  \Theta^{\prime
}\left(  \rho^{1/2}\otimes I\right)  \geq\left(  \rho^{1/2}\otimes I\right)
G_{\frac{\alpha}{\alpha-1}}(I,\Omega^{\prime})\left(  \rho^{1/2}\otimes
I\right)  \\
&  \Leftrightarrow\quad\Theta\geq G_{\frac{\alpha}{\alpha-1}}(\rho\otimes
I,\Omega),
\end{align}
with the final equality following from the definitions in
\eqref{eq:subst-defs}\ and the transformer equality in
\eqref{eq:KA-transformer-eq}. The proofs of~\eqref{eq:var-exp-ch-a-1/2-1} and
\eqref{eq:var-exp-ch-a-gt-1}\ follow similarly.

As such, we have rewritten $Q_{\alpha,H,E}^{M}(\mathcal{N}\Vert\mathcal{M})$
for $\alpha\in\left(  0,1/2\right)  $ in terms of hypograph of $G_{\frac
{\alpha}{\alpha-1}}$, for $\alpha\in\lbrack1/2,1)$ in terms of the hypograph
of $G_{1-\frac{1}{\alpha}}$, and for $\alpha>1$ in terms of the epigraph of
$G_{1-\frac{1}{\alpha}}$. By appealing to~\cite[Theorem~3]{Fawzi2017}, it
follows that all of these quantities can be efficiently calculated for
rational $\alpha$ by means of semi-definite programming, with the stated complexity.
\end{proof}

\section{Measured relative entropy of channels}

\label{sec:measured-rel-ent-channels}

\subsection{Definition and basic properties}

\begin{definition}[Measured relative entropy of channels]
\label{def:measured-rel-ent-channels}
Given a quantum channel $\mathcal{N}_{A\rightarrow B}$, a completely positive
map $\mathcal{M}_{A\rightarrow B}$, a Hamiltonian $H_{A}$ (Hermitian operator
acting on system $A$), and an energy constraint $E\in\mathbb{R}$, the
energy-constrained measured relative entropy of channels is defined as
\begin{multline}
D_{H,E}^{M}(\mathcal{N}\Vert\mathcal{M}
)\coloneqq\label{eq:measured-rel-ent-channels-EC}\\
\sup_{\substack{d_{R^{\prime}}\in\mathbb{N},\\\rho_{R^{\prime}A}\in
\mathbb{D}(\mathcal{H}_{R^{\prime}A}\mathcal{)}}}\left\{  D^{M}(\mathcal{N}
_{A\rightarrow B}(\rho_{R^{\prime}A})\Vert\mathcal{M}_{A\rightarrow B}
(\rho_{R^{\prime}A})):\operatorname{Tr}[H_{A}\rho_{A}]\leq E\right\}  .
\end{multline}
\end{definition}

The motivation for this definition is the same as that given after~\eqref{eq:EC-meas-Renyi-channel}.

Similar to what was observed in
Proposition~\ref{prop:simplify-measured-channels}, although the optimization
in~\eqref{eq:measured-rel-ent-channels-EC} is defined to be over an unbounded
space, it is possible to simplify the optimization task as follows.

\begin{proposition}
\label{prop:simplify-measured-channels-standard}Given a quantum channel
$\mathcal{N}_{A\rightarrow B}$, a completely positive map $\mathcal{M}
_{A\rightarrow B}$, a Hamiltonian $H_{A}$, and an energy constraint
$E\in\mathbb{R}$, the measured relative entropy of channels can be written as
follows:
\begin{equation}
D_{H,E}^{M}(\mathcal{N}\Vert\mathcal{M})=\sup_{\substack{\rho_{R}\in
\mathbb{D}(\mathcal{H}_{R}),\\\operatorname{Tr}[H_{A}\rho_{A}]\leq E}}\left\{
D^{M}(\rho_{R}^{1/2}\Gamma_{RB}^{\mathcal{N}}\rho_{R}^{1/2}\Vert\rho_{R}
^{1/2}\Gamma_{RB}^{\mathcal{M}}\rho_{R}^{1/2})\right\}  ,
\end{equation}
where $\mathcal{H}_{R}\simeq\mathcal{H}_{A}$ and $\rho_{R}=\rho_{A}$.
\end{proposition}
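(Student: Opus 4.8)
The plan is to obtain this statement as the $\alpha\to 1$ limit of its R\'{e}nyi counterpart, Proposition~\ref{prop:simplify-measured-channels}, rather than rerunning the purification argument that underlies that proposition. The first step is to record a state-level identity. By~\eqref{eq:renyi-ordered} the classical R\'{e}nyi relative entropy is non-decreasing in $\alpha$, and since a supremum of non-decreasing functions is non-decreasing, so is $\alpha\mapsto D_\alpha^{M}(\rho\Vert\sigma)$ on $(0,1)$; together with Proposition~\ref{prop:alpha-1-limit-measured}, this gives, for every state $\rho$ and positive semi-definite operator $\sigma$,
\[
D^{M}(\rho\Vert\sigma)=\lim_{\alpha\to 1}D_\alpha^{M}(\rho\Vert\sigma)=\sup_{\alpha\in(0,1)}D_\alpha^{M}(\rho\Vert\sigma),
\]
with the understanding that this also covers the case where the quantities equal $+\infty$.

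The second step is to apply this identity term by term inside the definition~\eqref{eq:measured-rel-ent-channels-EC}, so that $D_{H,E}^{M}(\mathcal{N}\Vert\mathcal{M})$ is rewritten as a double supremum, over admissible input states $\rho_{R^{\prime}A}$ (those with $\operatorname{Tr}[H_A\rho_A]\le E$, and over the reference dimension $d_{R^{\prime}}\in\mathbb{N}$) and over $\alpha\in(0,1)$, of $D_\alpha^{M}(\mathcal{N}_{A\to B}(\rho_{R^{\prime}A})\Vert\mathcal{M}_{A\to B}(\rho_{R^{\prime}A}))$. Interchanging the two suprema, which is always permissible, and recognizing the supremum over $\rho_{R^{\prime}A}$ as $D_{\alpha,H,E}^{M}(\mathcal{N}\Vert\mathcal{M})$ from Definition~\ref{def:measured-renyi-channels}, one obtains $D_{H,E}^{M}(\mathcal{N}\Vert\mathcal{M})=\sup_{\alpha\in(0,1)}D_{\alpha,H,E}^{M}(\mathcal{N}\Vert\mathcal{M})$. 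Now invoke Proposition~\ref{prop:simplify-measured-channels} for each such $\alpha$ to replace this by a supremum over $\rho_R\in\mathbb{D}(\mathcal{H}_R)$ with $\mathcal{H}_R\simeq\mathcal{H}_A$, $\rho_R=\rho_A$, $\operatorname{Tr}[H_A\rho_A]\le E$, of $D_\alpha^{M}(\rho_R^{1/2}\Gamma_{RB}^{\mathcal{N}}\rho_R^{1/2}\Vert\rho_R^{1/2}\Gamma_{RB}^{\mathcal{M}}\rho_R^{1/2})$; swap the two suprema once more; and apply the state-level identity in reverse, collapsing $\sup_{\alpha\in(0,1)}D_\alpha^{M}$ back to $D^{M}$. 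This yields the claimed formula. Note that the energy constraint $\operatorname{Tr}[H_A\rho_A]\le E$ does not involve $\alpha$, so it passes unchanged through every one of these manipulations.

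Along this route there is essentially no obstacle: the only things to verify are the monotonicity of $D_\alpha^{M}$ in $\alpha$ (immediate from~\eqref{eq:renyi-ordered}) and the legitimacy of interchanging commuting suprema (trivial). For completeness I would also mention the self-contained alternative, which mirrors the proof of Proposition~\ref{prop:simplify-measured-channels} (i.e.\ of~\cite[Lemma~6]{Cooney2016}): restrict to pure input states by purifying the reference system and using the data-processing inequality for $D^{M}$; observe that $D^{M}$ is invariant under an isometry applied to the reference, again by data processing applied in both directions; and hence represent an optimal pure input by the canonical purification $\rho_R^{1/2}\Gamma_{RA}\rho_R^{1/2}$ with $\rho_R=\rho_A$, under which the constraint becomes $\operatorname{Tr}[H_A\rho_R]\le E$. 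In that second route the mildly delicate point is the isometric invariance of $D^{M}$; the limiting argument sidesteps it entirely, and is what I would actually write.
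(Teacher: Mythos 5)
Your argument is correct, but it takes a genuinely different route from the paper. The paper's proof is a one-line deferral: it states that the proof is the same as that of Proposition~\ref{prop:simplify-measured-channels}, i.e., the purification/data-processing argument of \cite[Lemma~6]{Cooney2016} (purify the input to restrict to pure states, use data processing and isometric invariance on the reference system, and note that the canonical purification $\rho_R^{1/2}\Gamma_{RA}\rho_R^{1/2}$ has reduced state $\rho_A$, so the energy constraint carries over) --- this is exactly the ``self-contained alternative'' you sketch at the end but choose not to write. Your main route instead reduces the claim to the already-established R\'enyi channel statement: you use $D^{M}(\rho\Vert\sigma)=\sup_{\alpha\in(0,1)}D_{\alpha}^{M}(\rho\Vert\sigma)$, which is precisely the first chain of equalities in the paper's proof of Proposition~\ref{prop:alpha-1-limit-measured} and is valid for a positive semi-definite second argument, then exchange suprema, apply Proposition~\ref{prop:simplify-measured-channels} for each $\alpha\in(0,1)$, exchange suprema again, and collapse $\sup_{\alpha\in(0,1)}D_{\alpha}^{M}$ back to $D^{M}$. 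All the manipulations are exchanges of suprema (hence unconditionally valid, including when the quantities are $+\infty$), and the operators $\rho_R^{1/2}\Gamma_{RB}^{\mathcal{N}}\rho_R^{1/2}$ and $\rho_R^{1/2}\Gamma_{RB}^{\mathcal{M}}\rho_R^{1/2}$ are respectively a state and a positive semi-definite operator, so the state-level identity applies on both sides. What your route buys is that nothing about $D^{M}$ beyond Proposition~\ref{prop:alpha-1-limit-measured} is needed --- no re-run of the data-processing/isometric-invariance argument for the standard measured relative entropy; what it costs is a dependence on the R\'enyi-channel machinery of Proposition~\ref{prop:simplify-measured-channels}, whereas the paper's direct argument is self-contained and structurally parallel to the R\'enyi case.
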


\begin{proof}
The proof is the same as that given for
Proposition~\ref{prop:simplify-measured-channels}.
\end{proof}

\subsection{Optimizing the measured relative entropy of channels}

In this section, we observe in
Proposition~\ref{prop:var-rep-measured-div-channels}\ that there is a
variational representation of the measured relative entropy of channels in
terms of a linear objective function and the hypograph of the operator
connection of the logarithm. From this observation, we conclude that there is
an efficient semi-definite optimization algorithm for computing the measured
relative entropy of channels, which makes use of the fact recalled in
Section~\ref{sec:op-connection}. Not only does this algorithm compute the
optimal value of $D_{H,E}^{M}(\mathcal{N}\Vert\mathcal{M})$, but it also
determines an optimal input state and measurement that achieves $D_{H,E}
^{M}(\mathcal{N}\Vert\mathcal{M})$, which, as mentioned previously, is of
significant value for applications. The proof of
Proposition~\ref{prop:var-rep-measured-div-channels} results from a
straightforward combination of
Proposition~\ref{prop:meas-rel-ent-states-var-rep-perspective},
Proposition~\ref{prop:simplify-measured-channels-standard}, and the
transformer equality in~\eqref{eq:transformer-conn-eq}.

\begin{proposition}
\label{prop:var-rep-measured-div-channels}Given is a quantum channel
$\mathcal{N}_{A\rightarrow B}$, a completely positive map $\mathcal{M}
_{A\rightarrow B}$, a Hamiltonian $H_{A}$, and an energy constraint
$E\in\mathbb{R}$. Let $\Gamma^{\mathcal{N}}$ and $\Gamma^{\mathcal{M}}$ be the
Choi operators of $\mathcal{N}_{A\rightarrow B}$ and $\mathcal{M}
_{A\rightarrow B}$, respectively. Then
\begin{equation}
D_{H,E}^{M}(\mathcal{N}\Vert\mathcal{M})=\sup_{\Omega,\rho>0, \Theta\in \mathbb{H}}\left\{
\begin{array}
[c]{c}
\operatorname{Tr}[\Theta\Gamma^{\mathcal{N}}]-\operatorname{Tr}[\Omega
\Gamma^{\mathcal{M}}]+1:\\
\operatorname{Tr}[\rho]=1,\quad\operatorname{Tr}[H\rho]\leq E,\\
\Theta\leq P_{\ln}(\rho\otimes I,\Omega)
\end{array}
\right\}  ,
\end{equation}
where $P_{\ln}$ is defined in~\eqref{eq:log-perspect}.

Furthermore, when $\Gamma^{\mathcal{N}}$ and $\Gamma^{\mathcal{M}}$ are $d_{A}
d_{B}\times d_{A} d_{B}$ matrices, the quantity $D_{H,E}^{M}(\mathcal{N}
\Vert\mathcal{M})$ can be efficiently calculated by means of a semi-definite
program up to an additive error $\varepsilon$, by means of $O(\sqrt{\ln(1/\varepsilon
)})$ linear matrix inequalities, each of size $2d_{A} d_{B}\times2d_{A} d_{B}$. Specifically, there exist $m,k\in \mathbb{N}$ such that $m+k=O(\sqrt
{\ln(1/\varepsilon)})$ and the following inequality holds:
\begin{equation}
    \left | D_{H,E}^{M}(\mathcal{N}\Vert\mathcal{M})-D_{H,E,m,k}^{M}(\mathcal{N}\Vert\mathcal{M})\right|\leq \varepsilon,
    \label{eq:rmk-log-approx-channels}
\end{equation}
where
\begin{multline}
D_{H,E,m,k}^{M}(\mathcal{N}\|\mathcal{M})\coloneqq\label{eq:measured-rel-ent-approx-m-k-channels}\\
\sup_{\substack{\Omega,\rho>0,\Theta\in\mathbb{H},\\
T_{1},\ldots,T_{m}\in\mathbb{H},\\
Z_{0},\ldots,Z_{k}\in\mathbb{H}
}
}\left\{ \begin{array}{c}
\Tr\!\left[\Theta\Gamma^{\mathcal{N}}\right]-\Tr\!\left[\Omega\Gamma^{\mathcal{M}}\right]+1:\\
\Tr[\rho]=1,\:\Tr[H\rho]\leq E,\\
Z_{0}=\Omega,\,\sum_{j=1}^{m}w_{j}T_{j}=2^{-k}\Theta,\\
\left\{ \begin{bmatrix}Z_{i} & Z_{i+1}\\
Z_{i+1} & \rho\otimes I
\end{bmatrix}\geq0\right\} _{i=0}^{k-1},\\
\left\{ \begin{bmatrix}Z_{k}-\rho\otimes I-T_{j} & -\sqrt{t_{j}}T_{j}\\
-\sqrt{t_{j}}T_{j} & \rho\otimes I-t_{j}T_{j}
\end{bmatrix}\geq0\right\} _{j=1}^{m}
\end{array}\right\} 
\end{multline}
and, for all $j\in\left\{ 1,\ldots,m\right\} $, $w_{j}$ and $t_{j}$
are the weights and nodes, respectively, for the $m$-point Gauss--Legendre
quadrature on the interval $\left[0,1\right]$.
\end{proposition}

\begin{proof}
This follows similarly to the proof of
Proposition~\ref{prop:var-rep-measured-renyi-channels}, and we provide the proof for
completeness. Indeed, here we combine
Proposition~\ref{prop:meas-rel-ent-states-var-rep-perspective}\ and
Proposition~\ref{prop:simplify-measured-channels-standard} and employ the
transformer equality in~\eqref{eq:transformer-conn-eq}. Consider that
\begin{align}
&  D_{H,E}^{M}(\mathcal{N}\Vert\mathcal{M})\nonumber\\
&  =\sup_{_{\substack{\rho_{R}\in\mathbb{D}(\mathcal{H}_{R}\mathcal{)}
,\\\operatorname{Tr}[H_{A}\rho_{A}]\leq E}}}D^{M}(\rho_{R}^{1/2}\Gamma
_{RB}^{\mathcal{N}}\rho_{R}^{1/2}\Vert\rho_{R}^{1/2}\Gamma_{RB}^{\mathcal{M}
}\rho_{R}^{1/2})\label{eq:proof-channel-standard-1}\\
&  =\sup_{\substack{\Omega^{\prime}>0,\Theta^{\prime}\in \mathbb{H},\\\rho_{R}\in
\mathbb{D}(\mathcal{H}_{R}\mathcal{)},\\\operatorname{Tr}[H_{A}\rho_{A}]\leq
E}}\left\{
\begin{array}
[c]{c}
\operatorname{Tr}[\Theta^{\prime}\rho_{R}^{1/2}\Gamma_{RB}^{\mathcal{N}}
\rho_{R}^{1/2}]-\operatorname{Tr}[\Omega^{\prime}\rho_{R}^{1/2}\Gamma
_{RB}^{\mathcal{M}}\rho_{R}^{1/2}]+1:\\
\Theta^{\prime}\leq P_{\ln}(I,\Omega^{\prime})
\end{array}
\right\}  \\
&  =\sup_{\substack{\Omega^{\prime},\rho_{R}>0,\Theta^{\prime}\in \mathbb{H},\\\operatorname{Tr}[\rho_{R}]=1,\\\operatorname{Tr}[H_{A}\rho_{A}]\leq
E}}\left\{
\begin{array}
[c]{c}
\operatorname{Tr}[\rho_{R}^{1/2}\Theta^{\prime}\rho_{R}^{1/2}\Gamma
_{RB}^{\mathcal{N}}]-\operatorname{Tr}[\rho_{R}^{1/2}\Omega^{\prime}\rho
_{R}^{1/2}\Gamma_{RB}^{\mathcal{M}}]+1:\\
\Theta^{\prime}\leq P_{\ln}(I,\Omega^{\prime})
\end{array}
\right\}  \\
&  =\sup_{\substack{\Omega,\rho_{R}>0,\Theta\in \mathbb{H},\\\operatorname{Tr}[\rho
_{R}]=1,\\\operatorname{Tr}[H_{A}\rho_{A}]\leq E}}\left\{
\begin{array}
[c]{c}
\operatorname{Tr}[\Theta\Gamma_{RB}^{\mathcal{N}}]-\operatorname{Tr}
[\Omega\Gamma_{RB}^{\mathcal{M}}]+1:\\
\Theta\leq P_{\ln}(\rho\otimes I,\Omega)
\end{array}
\right\}  .\label{eq:proof-channel-standard-last}
\end{align}
The first equality follows from
Proposition~\ref{prop:simplify-measured-channels-standard}. The second
equality follows from
Proposition~\ref{prop:meas-rel-ent-states-var-rep-perspective}. The third
equality follows from cyclicity of trace and the fact that the function
\begin{equation}
\rho_{R}\mapsto\operatorname{Tr}[\rho_{R}^{1/2}\Theta^{\prime}\rho_{R}
^{1/2}\Gamma_{RB}^{\mathcal{N}}]-\operatorname{Tr}[\rho_{R}^{1/2}
\Omega^{\prime}\rho_{R}^{1/2}\Gamma_{RB}^{\mathcal{M}}]+1
\end{equation}
is continuous in $\rho_{R}$, so that the optimization can be performed over
the set of positive definite density operators (dense in the set of all
density operators). The final equality follows from defining
\begin{equation}
\Omega\coloneqq\rho_{R}^{1/2}\Omega^{\prime}\rho_{R}^{1/2},\qquad
\Theta\coloneqq\rho_{R}^{1/2}\Theta^{\prime}\rho_{R}^{1/2}
,\label{eq:subst-defs-rel-ent}
\end{equation}
and noting that
\begin{equation}
\Omega^{\prime}>0, \quad \Theta^{\prime}\in \mathbb{H}\quad\Leftrightarrow\quad\Omega>0,\quad \Theta\in \mathbb{H},
\end{equation}
as well as
\begin{align}
&  \quad\Theta^{\prime}\leq P_{\ln}(I,\Omega^{\prime})\nonumber\\
&  \Leftrightarrow\quad\left(  \rho^{1/2}\otimes I\right)  \Theta^{\prime
}\left(  \rho^{1/2}\otimes I\right)  \leq\left(  \rho^{1/2}\otimes I\right)
P_{\ln}(I,\Omega^{\prime})\left(  \rho^{1/2}\otimes I\right)  \\
&  \Leftrightarrow\quad\Theta\leq P_{\ln}(\rho\otimes I,\Omega),
\end{align}
with the final equality following from the definitions in
\eqref{eq:subst-defs-rel-ent}\ and the transformer equality in~\eqref{eq:transformer-conn-eq}.

As such, we have rewritten $D_{H,E}^{M}(\mathcal{N}\Vert\mathcal{M})$ in terms
of the hypograph of $P_{\ln}$. By appealing to~\cite[Theorem~2 and
Theorem~3]{Fawzi2019}, it follows that this quantity can be efficiently
calculated by means of semi-definite programming, with the complexity stated above. In more detail, following the proof of Proposition~\ref{prop:meas-rel-ent-states-var-rep-perspective} closely and recalling
the definition of $\hyp_{r_{m,k}}$ and its semi-definite representation
in~\eqref{eq:hypo-r-mk-def}--\eqref{eq:SDP-rep-hypo-r-mk-last}, observe that
\begin{equation}
D_{H,E,m,k}^{M}(\mathcal{N}\|\mathcal{M})=\sup_{\Omega,\rho>0,\Theta\in\mathbb{H}}\left\{ \begin{array}{c}
\Tr\!\left[\Theta\Gamma^{\mathcal{N}}\right]-\Tr\!\left[\Omega\Gamma^{\mathcal{M}}\right]+1:\\
\Tr[\rho]=1,\:\Tr[H\rho]\leq E,\\
\Theta\leq P_{r_{m,k}}(\rho\otimes I,\Omega)
\end{array}\right\} .
\end{equation}
By invoking the first part of Lemma~\ref{lem:ln-hypograph-approx},
we conclude that
\begin{equation}
\Theta\leq P_{\ln}(\rho\otimes I,\Omega)\quad\implies\quad\Theta-\varepsilon\left(\rho\otimes I\right)\leq P_{r_{m,k}}(\rho\otimes I,\Omega),
\end{equation}
so that
\begin{align}
 & D_{H,E}^{M}(\mathcal{N}\|\mathcal{M})\nonumber \\
 & =\sup_{\Omega,\rho>0,\Theta\in\mathbb{H}}\left\{ \begin{array}{c}
\Tr\!\left[\Theta\Gamma^{\mathcal{N}}\right]-\Tr\!\left[\Omega\Gamma^{\mathcal{M}}\right]+1:\\
\Tr[\rho]=1,\:\Tr[H\rho]\leq E,\\
\Theta\leq P_{\ln}(\rho\otimes I,\Omega)
\end{array}\right\} \\
 & \leq\sup_{\Omega,\rho>0,\Theta\in\mathbb{H}}\left\{ \begin{array}{c}
\Tr\!\left[\Theta\Gamma^{\mathcal{N}}\right]-\Tr\!\left[\Omega\Gamma^{\mathcal{M}}\right]+1:\\
\Tr[\rho]=1,\:\Tr[H\rho]\leq E,\\
\Theta-\varepsilon\left(\rho\otimes I\right)\leq P_{r_{m,k}}(\rho\otimes I,\Omega)
\end{array}\right\} \\
 & =\sup_{\Omega,\rho>0,\Theta'\in\mathbb{H}}\left\{ \begin{array}{c}
\Tr\!\left[\left(\Theta'+\varepsilon\left(\rho\otimes I\right)\right)\Gamma^{\mathcal{N}}\right]-\Tr\!\left[\Omega\Gamma^{\mathcal{M}}\right]+1:\\
\Tr[\rho]=1,\:\Tr[H\rho]\leq E,\\
\Theta'\leq P_{r_{m,k}}(\rho\otimes I,\Omega)
\end{array}\right\} \\
 & =\sup_{\Omega,\rho>0,\Theta'\in\mathbb{H}}\left\{ \begin{array}{c}
\Tr\!\left[\Theta'\Gamma^{\mathcal{N}}\right]-\Tr\!\left[\Omega\Gamma^{\mathcal{M}}\right]+1:\\
\Tr[\rho]=1,\:\Tr[H\rho]\leq E,\\
\Theta'\leq P_{r_{m,k}}(\rho\otimes I,\Omega)
\end{array}\right\} +\varepsilon\\
 & =D_{H,E,m,k}^{M}(\mathcal{N}\|\mathcal{M})+\varepsilon.\label{eq:approx-rmk-ln-1-1}
\end{align}
The second equality follows from the substitution $\Theta'=\Theta-\varepsilon\left(\rho\otimes I\right)$
and the penultimate equality follows because
\begin{align}
\Tr\!\left[\left(\Theta'+\varepsilon\left(\rho\otimes I\right)\right)\Gamma^{\mathcal{N}}\right] & =\Tr\!\left[\Theta'\Gamma^{\mathcal{N}}\right]+\varepsilon\Tr\!\left[\left(\rho\otimes I\right)\Gamma^{\mathcal{N}}\right]\\
 & =\Tr\!\left[\Theta'\Gamma^{\mathcal{N}}\right]+\varepsilon,
\end{align}
given that $\Tr\!\left[\left(\rho\otimes I\right)\Gamma^{\mathcal{N}}\right]=\Tr\!\left[\mathcal{N}(\rho^T)\right]=1$, where the second equality follows from \cite[Eq.~(4.2.12)]{khatri2024principles}.
Now by invoking the second part of Lemma~\ref{lem:ln-hypograph-approx},
we conclude that
\begin{equation}
\Theta\leq P_{r_{m,k}}(\rho\otimes I,\Omega)\quad\implies\quad\Theta-\varepsilon\left(\rho\otimes I\right)\leq P_{\ln}(\rho\otimes I,\Omega),
\end{equation}
so that
\begin{align}
 & D_{H,E,m,k}^{M}(\mathcal{N}\|\mathcal{M})\nonumber \\
 & =\sup_{\Omega,\rho>0,\Theta\in\mathbb{H}}\left\{ \begin{array}{c}
\Tr\!\left[\Theta\Gamma^{\mathcal{N}}\right]-\Tr\!\left[\Omega\Gamma^{\mathcal{M}}\right]+1:\\
\Tr[\rho]=1,\:\Tr[H\rho]\leq E,\\
\Theta\leq P_{r_{m,k}}(\rho\otimes I,\Omega)
\end{array}\right\} \\
 & \leq\sup_{\Omega,\rho>0,\Theta\in\mathbb{H}}\left\{ \begin{array}{c}
\Tr\!\left[\Theta\Gamma^{\mathcal{N}}\right]-\Tr\!\left[\Omega\Gamma^{\mathcal{M}}\right]+1:\\
\Tr[\rho]=1,\:\Tr[H\rho]\leq E,\\
\Theta-\varepsilon\left(\rho\otimes I\right)\leq P_{\ln}(\rho\otimes I,\Omega)
\end{array}\right\} \\
 & =\sup_{\Omega,\rho>0,\Theta'\in\mathbb{H}}\left\{ \begin{array}{c}
\Tr\!\left[\left(\Theta'+\varepsilon\left(\rho\otimes I\right)\right)\Gamma^{\mathcal{N}}\right]-\Tr\!\left[\Omega\Gamma^{\mathcal{M}}\right]+1:\\
\Tr[\rho]=1,\:\Tr[H\rho]\leq E,\\
\Theta'\leq P_{\ln}(\rho\otimes I,\Omega)
\end{array}\right\} \\
 & =\sup_{\Omega,\rho>0,\Theta'\in\mathbb{H}}\left\{ \begin{array}{c}
\Tr\!\left[\Theta'\Gamma^{\mathcal{N}}\right]-\Tr\!\left[\Omega\Gamma^{\mathcal{M}}\right]+1:\\
\Tr[\rho]=1,\:\Tr[H\rho]\leq E,\\
\Theta'\leq P_{\ln}(\rho\otimes I,\Omega)
\end{array}\right\} +\varepsilon\\
 & =D_{H,E,m,k}^{M}(\mathcal{N}\|\mathcal{M})+\varepsilon.\label{eq:approx-rmk-ln-2-1}
\end{align}
Combining~\eqref{eq:approx-rmk-ln-1-1} and~\eqref{eq:approx-rmk-ln-2-1},
we conclude~\eqref{eq:rmk-log-approx-channels}.
\end{proof}

\section{Conclusion}

\label{sec:conclusion}

The main contributions of our paper are efficient semi-definite optimization
algorithms for computing measured relative entropies of quantum states and
channels. We did so by combining the results of
\cite{Berta2015OnEntropies,Fawzi2017,Fawzi2019} to obtain the result for
states and then we generalized these to quantum channels by using basic
properties of the weighted geometric mean and operator connection of the
logarithm. Our findings are of significant value for applications, in which
one wishes to find numerical characterizations of technologically feasible,
hybrid quantum-classical strategies for quantum hypothesis testing of states
and channels. 

Going forward from here, we note that further work in this direction could consider combining our findings here with those of \cite{rippchen2024locally}, the latter being about measured R\'enyi divergences under restricted forms of measurements. We also think it is a very interesting open question,
related to those from
\cite{Fawzi2017,fawzi2023optimalselfconcordantbarriersquantum}, to determine
semi-definite programs for various R\'{e}nyi relative entropies of quantum
channels, including those based on the sandwiched
\cite{muller2013quantum,wilde2014strong}\ and Petz--R\'{e}nyi \cite{P85,P86}
\ relative entropies. More generally, one could consider the same question for
$\alpha$-$z$ R\'{e}nyi relative entropies \cite{Audenaert2015}, and here we
think the variational formulas from \cite[Appendix~B]
{mosonyi2023continuitypropertiesquantumrenyi} could be useful in addressing
this question.

\bigskip

\textbf{Acknolwedgements}---We are grateful to Ludovico Lami for communicating
to us that the root fidelity of states $\rho$ and $\sigma$ can be written as
$\frac{1}{2}\inf_{Y,Z>0}\left\{  \operatorname{Tr}[Y\rho]+\operatorname{Tr}
[Z\sigma]:G_{1/2}(Y,Z)=I\right\}  $, which was a starting point for our work
here. We also thank James Saunderson for several email exchanges regarding
semi-definite optimization and the weighted geometric mean, related to
\cite{Fawzi2017}. ZH is supported by a Sydney Quantum Academy Postdoctoral
Fellowship and an ARC DECRA Fellowship (DE230100144) \textquotedblleft
Quantum-enabled super-resolution imaging\textquotedblright. She is also
grateful to the Cornell School of Electrical and Computer Engineering and the
Cornell Lab of Ornithology for hospitality during a June 2024 research visit.
MMW\ acknowledges support from the National Science Foundation under Grant
No.~2304816 and from Air Force Research Laboratory under agreement number FA8750-23-2-0031.

This material is based on research sponsored by Air Force Research Laboratory
under agreement number FA8750-23-2-0031. The U.S. Government is authorized to
reproduce and distribute reprints for Governmental purposes notwithstanding
any copyright notation thereon. The views and conclusions contained herein are
those of the authors and should not be interpreted as necessarily representing
the official policies or endorsements, either expressed or implied, of Air
Force Research Laboratory or the U.S. Government.

{\footnotesize
\bibliographystyle{alphaurl}
\bibliography{Ref-meas-SDP}
}

\appendix

\section{Alternative proof of variationals formulas for measured R\'enyi relative entropies}

\label{app:alt-proof-measured-renyi}

\begin{proposition}[\cite{Berta2015OnEntropies}]
\label{prop:alt-proof-var-measured-Renyi}For a state $\rho$ and a positive
semi-definite operator $\sigma$, the expression in
\eqref{eq:measured-Renyi-var-add} holds.
\end{proposition}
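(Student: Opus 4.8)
The plan is to prove the two inequalities that make up \eqref{eq:measured-Renyi-var-add} separately. Write $g_\alpha(\omega)\coloneqq\frac{1}{\alpha-1}\ln\!\left(\alpha\operatorname{Tr}[\omega\rho]+(1-\alpha)\operatorname{Tr}[\omega^{\alpha/(\alpha-1)}\sigma]\right)$ for the objective, and treat the ranges $\alpha\in(0,1)$ and $\alpha>1$ in parallel, paying attention to the sign of $\frac{1}{\alpha-1}$. The one structural fact I would use is that for a \emph{projective} measurement $(\Pi_x)_x$ (mutually orthogonal projections summing to $I$) and $\omega=\sum_x\lambda_x\Pi_x$ one has $\omega^s=\sum_x\lambda_x^s\Pi_x$, so that $\operatorname{Tr}[\omega\rho]=\sum_x\lambda_x p(x)$ and $\operatorname{Tr}[\omega^{\alpha/(\alpha-1)}\sigma]=\sum_x\lambda_x^{\alpha/(\alpha-1)}q(x)$, where $p(x)\coloneqq\operatorname{Tr}[\Pi_x\rho]$ and $q(x)\coloneqq\operatorname{Tr}[\Pi_x\sigma]$; this is precisely what lets a scalar inequality do all the work.

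For $\sup_{\omega>0}g_\alpha(\omega)\le D_\alpha^M(\rho\Vert\sigma)$, I would fix any $\omega>0$, let $\omega=\sum_i\lambda_i\Pi_i$ be its spectral decomposition (so $(\Pi_i)_i$ is a POVM), and apply to each pair $u=\lambda_i p(i)$, $v=\lambda_i^{\alpha/(\alpha-1)}q(i)$ the weighted arithmetic--geometric mean inequality $\alpha u+(1-\alpha)v\ge u^\alpha v^{1-\alpha}$ when $\alpha\in(0,1)$, and the reverse estimate $\alpha u-(\alpha-1)v\le u^\alpha v^{1-\alpha}$ (a consequence of Bernoulli's inequality) when $\alpha>1$. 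Because the powers of $\lambda_i$ cancel, the right-hand side is $p(i)^\alpha q(i)^{1-\alpha}$; summing over $i$ therefore bounds $\alpha\operatorname{Tr}[\omega\rho]+(1-\alpha)\operatorname{Tr}[\omega^{\alpha/(\alpha-1)}\sigma]$ from below (for $\alpha\in(0,1)$) or above (for $\alpha>1$) by $\sum_i p(i)^\alpha q(i)^{1-\alpha}$, which is in turn $\ge$, respectively $\le$, $Q_\alpha^M(\rho\Vert\sigma)$ by the definition \eqref{eq:quasi-renyi}. Applying $\ln$ and then multiplying by $\frac{1}{\alpha-1}$ flips or preserves the inequality correctly and yields $g_\alpha(\omega)\le D_\alpha^M(\rho\Vert\sigma)$ in both cases.

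For the reverse inequality I would build an optimal $\omega$ from the equality case of the same scalar inequality. Fix an arbitrary orthonormal basis $\{|x\rangle\}$, set $t_x\coloneqq(p(x)/q(x))^{\alpha-1}$ with $p(x)=\langle x|\rho|x\rangle$, $q(x)=\langle x|\sigma|x\rangle$, and put $\omega\coloneqq\sum_x t_x\,|x\rangle\!\langle x|$. A one-line computation gives $t_x p(x)=t_x^{\alpha/(\alpha-1)}q(x)=p(x)^\alpha q(x)^{1-\alpha}$, so $\alpha\operatorname{Tr}[\omega\rho]+(1-\alpha)\operatorname{Tr}[\omega^{\alpha/(\alpha-1)}\sigma]=\sum_x p(x)^\alpha q(x)^{1-\alpha}$ and hence $g_\alpha(\omega)=D_\alpha\!\left((p(x))_x\Vert(q(x))_x\right)$, the classical R\'enyi divergence of the measured distributions. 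Taking the supremum over all orthonormal bases on both sides and invoking \eqref{eq:proj-equals-measured} (together with the fact that a rank-one projective measurement is an orthonormal basis, so that $D_\alpha^M(\rho\Vert\sigma)$ equals this supremum of the classical R\'enyi divergence), we obtain $\sup_{\omega>0}g_\alpha(\omega)\ge D_\alpha^M(\rho\Vert\sigma)$, which closes the argument.

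The step I expect to require the most care is the boundary behavior in this last construction: the prescription $t_x=(p(x)/q(x))^{\alpha-1}$ makes $\omega$ fail positive definiteness when some $p(x)$ or $q(x)$ vanishes, and for $\alpha>1$ one must also ensure the logarithm's argument stays positive. I would handle these by a routine limiting argument---replacing the offending $t_x$ by $\varepsilon$ or $\varepsilon^{-1}$, letting $\varepsilon\to0$, and using continuity of $g_\alpha$---together with the support conventions recorded after Definition~\ref{def:measured-renyi-states}, under which, when support containment fails for $\alpha>1$, both sides of \eqref{eq:measured-Renyi-var-add} equal $+\infty$ and $g_\alpha(\omega)$ can be driven to $+\infty$ directly.
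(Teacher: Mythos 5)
Your proposal is correct and follows essentially the same route as the paper's proof in Appendix~\ref{app:alt-proof-measured-renyi}: spectral decomposition of $\omega$ plus the weighted arithmetic--geometric mean inequality for $\alpha\in(0,1)$ (reversed via Bernoulli's inequality for $\alpha>1$) in one direction, the equality case $\omega_y=(q(y)/p(y))^{1-\alpha}$ to construct an optimal $\omega$ in the other, and \eqref{eq:proj-equals-measured} to pass from projective measurements to $D_\alpha^M$. Your explicit treatment of the degenerate cases where $p(x)$ or $q(x)$ vanishes is a welcome refinement that the paper leaves implicit, but it does not change the argument.
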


\begin{proof}
Let us begin with the case $\alpha\in\left(  0,1\right)  $. Let us write an
arbitrary $\omega>0$ in terms of a spectral decomposition as
\begin{equation}
\omega=\sum_{y}\omega_{y}|\phi_{y}\rangle\!\langle\phi_{y}
|,\label{eq:alt-proof-var-form-1}
\end{equation}
where $\omega_{y}>0$ for all $y$ and $\left\{  |\phi_{y}\rangle\right\}  _{y}$
is an orthonormal set. Consider that
\begin{align}
&  \alpha\operatorname{Tr}[\omega\rho]+\left(  1-\alpha\right)
\operatorname{Tr}[\omega^{\frac{\alpha}{\alpha-1}}\sigma]\nonumber\\
&  =\alpha\sum_{y}\omega_{y}\langle\phi_{y}|\rho|\phi_{y}\rangle+\left(
1-\alpha\right)  \sum_{y}\omega_{y}^{\frac{\alpha}{\alpha-1}}\langle\phi
_{y}|\sigma|\phi_{y}\rangle\\
&  =\sum_{y}\alpha\omega_{y}\langle\phi_{y}|\rho|\phi_{y}\rangle+\left(
1-\alpha\right)  \omega_{y}^{\frac{\alpha}{\alpha-1}}\langle\phi_{y}
|\sigma|\phi_{y}\rangle\\
&  \geq\sum_{y}\left[  \omega_{y}\langle\phi_{y}|\rho|\phi_{y}\rangle\right]
^{\alpha}\left[  \omega_{y}^{\frac{\alpha}{\alpha-1}}\langle\phi_{y}
|\sigma|\phi_{y}\rangle\right]  ^{1-\alpha}\label{eq:ineq-step-alt-proof}\\
&  =\sum_{y}\omega_{y}^{\alpha}\omega_{y}^{-\alpha}\langle\phi_{y}|\rho
|\phi_{y}\rangle^{\alpha}\langle\phi_{y}|\sigma|\phi_{y}\rangle^{1-\alpha}\\
&  =\sum_{y}\langle\phi_{y}|\rho|\phi_{y}\rangle^{\alpha}\langle\phi
_{y}|\sigma|\phi_{y}\rangle^{1-\alpha}.
\end{align}
The inequality follows from the inequality of weighted arithmetic and
geometric means (i.e., $\alpha b+\left(  1-\alpha\right)  c\geq b^{\alpha
}c^{1-\alpha}$ for all $b,c\geq0$ and $\alpha\in\left(  0,1\right)  $),
applied for all $y$. This inequality is saturated when the following condition
holds
\begin{equation}
\omega_{y}\langle\phi_{y}|\rho|\phi_{y}\rangle=\omega_{y}^{\frac{\alpha
}{\alpha-1}}\langle\phi_{y}|\sigma|\phi_{y}\rangle
.\label{eq:saturation-condition}
\end{equation}
That is, it is saturated when the two terms being averaged are equal.
Rearranging this, we find that saturation holds if
\begin{equation}
\omega_{y}=\left(  \frac{\langle\phi_{y}|\sigma|\phi_{y}\rangle}{\langle
\phi_{y}|\rho|\phi_{y}\rangle}\right)  ^{1-\alpha}.
\end{equation}
Thus, it follows that for every projective measurement $\left(  |\phi
_{y}\rangle\!\langle\phi_{y}|\right)  _{y}$, there exists $\omega>0$ such that
\begin{equation}
\alpha\operatorname{Tr}[\omega\rho]+\left(  1-\alpha\right)  \operatorname{Tr}
[\omega^{\frac{\alpha}{\alpha-1}}\sigma]=\sum_{y}\langle\phi_{y}|\rho|\phi
_{y}\rangle^{\alpha}\langle\phi_{y}|\sigma|\phi_{y}\rangle^{1-\alpha
}.\label{eq:alt-proof-var-form-last}
\end{equation}
As such we conclude the desired equality for all $\alpha\in\left(  0,1\right)
$:
\begin{equation}
\inf_{\omega>0}\left\{  \alpha\operatorname{Tr}[\omega\rho]+\left(
1-\alpha\right)  \operatorname{Tr}[\omega^{\frac{\alpha}{\alpha-1}}
\sigma]\right\}  =Q_{\alpha}^{P}(\rho\Vert\sigma).
\end{equation}
Then applying~\eqref{eq:proj-equals-measured} leads to the claim for
$\alpha\in\left(  0,1\right)  $.

The case $\alpha>1$ follows from a very similar proof, but instead makes use
of the following inequality: $\alpha b+\left(  1-\alpha\right)  c\leq
b^{\alpha}c^{1-\alpha}$, which holds for all $b\geq0$, $c>0$, and $\alpha>1$.
This inequality is a consequence of Bernoulli's inequality, which states that
$1+rx\leq\left(  1+x\right)  ^{r}$ holds for all $r\geq1$ and $x\geq-1$.
Indeed, consider that
\begin{align}
\alpha b+\left(  1-\alpha\right)  c\leq b^{\alpha}c^{1-\alpha}\quad &
\Longleftrightarrow\quad\alpha\left(  \frac{b}{c}\right)  +\left(
1-\alpha\right)  \leq\left(  \frac{b}{c}\right)  ^{\alpha}\\
&  \Longleftrightarrow\quad\alpha\left(  \frac{b}{c}-1\right)  +1\leq\left(
\frac{b}{c}-1+1\right)  ^{\alpha},
\end{align}
so that we choose $x=\frac{b}{c}-1\geq-1$ and $r=\alpha\geq1$ in Bernoulli's
inequality. So then we conclude the following for $\alpha>1$:
\begin{equation}
\sup_{\omega>0}\left\{  \alpha\operatorname{Tr}[\omega\rho]+\left(
1-\alpha\right)  \operatorname{Tr}[\omega^{\frac{\alpha}{\alpha-1}}
\sigma]\right\}  =Q_{\alpha}^{P}(\rho\Vert\sigma),
\end{equation}
by applying the same reasoning as in
\eqref{eq:alt-proof-var-form-1}--\eqref{eq:alt-proof-var-form-last}, but the
inequality in~\eqref{eq:ineq-step-alt-proof} goes in the opposite direction
for $\alpha>1$.
\end{proof}

\section{$\alpha \to 1$ limit of the measured R\'enyi relative entropy}

\label{app:alpha-1-limit}

\begin{proof}[Proof of Proposition~\ref{prop:alpha-1-limit-measured}]
This follows because
\begin{align}
\lim_{\alpha\rightarrow1^{-}}D_{\alpha}^{M}(\rho\Vert\sigma)  &  =\sup
_{\alpha\in\left(  0,1\right)  }D_{\alpha}^{M}(\rho\Vert\sigma)\\
&  =\sup_{\alpha\in\left(  0,1\right)  }\sup_{\mathcal{X},\left(  \Lambda
_{x}\right)  _{x\in\mathcal{X}}}D_{\alpha}(\left(  \operatorname{Tr}
[\Lambda_{x}\rho]\right)  _{x\in\mathcal{X}}\Vert\left(  \operatorname{Tr}
[\Lambda_{x}\sigma]\right)  _{x\in\mathcal{X}})\\
&  =\sup_{\mathcal{X},\left(  \Lambda_{x}\right)  _{x\in\mathcal{X}}}
\sup_{\alpha\in\left(  0,1\right)  }D_{\alpha}(\left(  \operatorname{Tr}
[\Lambda_{x}\rho]\right)  _{x\in\mathcal{X}}\Vert\left(  \operatorname{Tr}
[\Lambda_{x}\sigma]\right)  _{x\in\mathcal{X}})\\
&  =\sup_{\mathcal{X},\left(  \Lambda_{x}\right)  _{x\in\mathcal{X}}}D(\left(
\operatorname{Tr}[\Lambda_{x}\rho]\right)  _{x\in\mathcal{X}}\Vert\left(
\operatorname{Tr}[\Lambda_{x}\sigma]\right)  _{x\in\mathcal{X}})\\
&  =D^{M}(\rho\Vert\sigma),
\end{align}
as noted in~\cite[Lemma~2]{rippchen2024locally}, and because
\begin{align}
\lim_{\alpha\rightarrow1^{+}}D_{\alpha}^{M}(\rho\Vert\sigma)  &  =\inf
_{\alpha>1}D_{\alpha}^{M}(\rho\Vert\sigma)\\
&  =\inf_{\alpha>1}\sup_{\mathcal{X},\left(  \Lambda_{x}\right)
_{x\in\mathcal{X}}}D_{\alpha}(\left(  \operatorname{Tr}[\Lambda_{x}
\rho]\right)  _{x\in\mathcal{X}}\Vert\left(  \operatorname{Tr}[\Lambda
_{x}\sigma]\right)  _{x\in\mathcal{X}})\\
&  =\sup_{\mathcal{X},\left(  \Lambda_{x}\right)  _{x\in\mathcal{X}}}
\inf_{\alpha>1}D_{\alpha}(\left(  \operatorname{Tr}[\Lambda_{x}\rho]\right)
_{x\in\mathcal{X}}\Vert\left(  \operatorname{Tr}[\Lambda_{x}\sigma]\right)
_{x\in\mathcal{X}})\\
&  =\sup_{\mathcal{X},\left(  \Lambda_{x}\right)  _{x\in\mathcal{X}}}D(\left(
\operatorname{Tr}[\Lambda_{x}\rho]\right)  _{x\in\mathcal{X}}\Vert\left(
\operatorname{Tr}[\Lambda_{x}\sigma]\right)  _{x\in\mathcal{X}})\\
&  =D_{\alpha}^{M}(\rho\Vert\sigma).
\end{align}
The third equality above is non-trivial and follows from the fact that it
suffices to optimize $D_{\alpha}(\left(  \operatorname{Tr}[\Lambda_{x}
\rho]\right)  _{x\in\mathcal{X}}\Vert\left(  \operatorname{Tr}[\Lambda
_{x}\sigma]\right)  _{x\in\mathcal{X}})$ over POVMs with a finite number of
outcomes (a compact and convex set)~\cite[Theorem~4]{Berta2015OnEntropies},
that the relative entropy $D_{\alpha}$ is lower semi-continuous
\cite[Theorem~15]{Erven2014}, and an application of the Mosonyi--Hiai minimax
theorem~\cite[Lemma~A.2]{Mosonyi2011}.
\end{proof}

\end{document}